\pgfplotsset{compat=1.14}
\newcommand{\probname}[1]{\textsc{#1}}
\newcommand{\setname}[1]{\textit{#1}}
\newcommand{\graphtypename}[1]{\textup{\textbf{#1}}}
\renewcommand{\patterns}[1]{\setname{patterns}(#1)}
\newcommand{\transactions}[1]{\setname{transactions}(#1)}
\newcommand{\labelop}[1]{\setname{label}(#1)}
\newcommand{\enum}[1]{#1\probname{\nobreakdash-enumerate}}
\newcommand{\extend}[1]{#1\probname{\nobreakdash-extend}}
\newcommand{\extendible}[1]{#1\probname{\nobreakdash-extendible}}
\newcommand{\extendiblek}[2]{#1\probname{\nobreakdash-extendible}\langle #2 \rangle}
\newcommand{\Max}[1]{\probname{Max}(#1)}
\newcommand{\MaxFIS}{\probname{MaxFIS}}
\newcommand{\MaxFFIS}{\probname{MaxFFIS}}
\newcommand{\MaxFS}[1]{\probname{MaxFS}(#1)}
\newcommand{\MaxFStau}[1]{\probname{MaxFS}^{\tau}(#1)}
\newcommand{\MaxFSQSn}{\probname{MaxFSQS}}
\newcommand{\MaxFFS}[1]{\probname{MaxFFS}(#1)}
\newcommand{\MaxSQSn}{\probname{MaxSQS}}
\newcommand{\Graphs}{\graphtypename{G}}
\newcommand{\BDG}[1]{\graphtypename{BDG}^{#1}}
\newcommand{\BTW}[1]{\graphtypename{BTW}^{#1}}
\newcommand{\DAG}{\graphtypename{DAG}}
\newcommand{\DirG}{\graphtypename{DirG}}
\newcommand{\T}{\graphtypename{T}}
\newcommand{\PLN}{\graphtypename{PLN}}
\newcommand{\ffpp}{\probname{ffbp}\xspace}
\newcommand{\ffp}{\probname{ffp}\xspace}
\newcommand{\G}{\mathcal{G}}
\renewcommand{\P}{\mathcal{P}}
\newcommand{\Q}{\mathcal{Q}}
\renewcommand{\R}{\mathcal{R}}
\renewcommand{\L}{\mathcal{L}}
\newcommand{\sharpP}{\probname{\#P}}
\newcommand{\sharpR}{\#\R}
\newcommand{\supp}{\operatorname{supp}}
\newcommand{\codeURL}{\url{https://people.mpi-inf.mpg.de/~pmiettin/frequency-based-reductions/}}
\newif\ifshortversion\shortversiontrue
\newcommand{\ifICDM}[2]{\ifshortversion #1\else #2\fi}
\title{Reductions for Frequency-Based Data Mining Problems}
\author[1]{Stefan Neumann\thanks{The first author gratefully acknowledges the
	financial support from the Doctoral Programme ``Vienna Graduate School on
		Computational Optimization'' which is funded by the Austrian Science Fund
		(FWF, project no.~W1260-N35).}}
\author[2]{Pauli Miettinen}
\affil[1]{University of Vienna,
			Faculty of Computer Science,
			Vienna, Austria
			\texttt{stefan.neumann@univie.ac.at}}
\affil[2]{Max Planck Institute for Informatics,
	Saarland Informatics Campus, Germany
	\texttt{pauli.miettinen@mpi-inf.mpg.de}}
\date{}
\begin{document}

\maketitle

\begin{abstract}
Studying the computational complexity of problems is one of the -- if not the
-- fundamental questions in computer science. Yet, surprisingly little is known
about the computational complexity of many central problems in data mining. In
this paper we study frequency-based problems and propose a new type of
reduction that allows us to compare the complexities of the maximal frequent
pattern mining problems in different domains (e.g. graphs or sequences).
Our results extend those of Kimelfeld and Kolaitis [ACM TODS, 2014]
to a broader range of data mining problems.
Our results show that, by allowing constraints in the pattern space, the complexities
of many maximal frequent pattern mining problems collapse.
These problems include maximal frequent subgraphs in labelled graphs,
maximal frequent itemsets, and maximal frequent subsequences with no
repetitions.
In addition to theoretical interest, our
results might yield more efficient algorithms for the studied problems.


\end{abstract}

\section{Introduction}

Computational complexity is a fundamental concept in computer science, with the
\Poly{} vs.\ \NP{} question being the most famous open problem in the field.
Yet, outside some \NP- and \sharpP-hardness proofs, computational complexity of the
central data mining problems is surprisingly little studied. This is perhaps
even more true for the \emph{frequency-based problems}, that is, for problems
where the goal is to enumerate all sufficiently frequent patterns (that admit
other possible constraints). Problems such as frequent itemset mining,
frequent subgraph mining, and frequent subsequence mining all belong to
this family of problems. Often the only computational complexity argument
for these problems is the observation that the output can be exponentially
large with respect to the input, and hence any algorithm might need
exponential time to enumerate the results.

We argue that this view is too limited for two reasons. First, there are more
fine-grained models of complexity than just the running time. In particular, for
enumeration problems we can use the framework of Johnson et
al.~\cite{johnson88generating}: in short, instead of studying the total running
time with respect to the input size, we can consider it as a function of the
total size of input \emph{and output}, or study the time it takes to create a
\emph{new} pattern when a set of patterns is already known (see
Section~\ref{sec:enumeration-problems} for more details). This
framework allows us to argue about the time complexity of enumeration problems with
potentially exponential output sizes. Another approach is the counting complexity
framework of Valiant~\cite{valiant79complexity} (see Section~\ref{sec:counting-complexity}).

The second reason why we argue that the ``output is exponential'' is a too limited view
for the computational complexity is that a significant question in computational
complexity is the relationships between the problems, that is, questions like
``can we solve problem $X$ efficiently if we can solve problem $Y$
efficiently?'' The main tool for answering these kinds of questions are 
\emph{reductions} between problems. In this work, we introduce a new type of
reduction between frequency-based problems called \emph{maximality-preserving
  reduction} (see Section~\ref{Sec:frequency-reductions}). Our reduction maps
the maximal patterns of one problem to the maximal patterns of the other
problem, thus allowing us to study questions like ``can we find the maximal
frequent subgraphs on labelled graphs using maximal frequent itemset mining
algorithms?'' Surprisingly, the answer to this question turns out to be
positive, although it requires that we consider specially constrained
maximal frequent pattern mining problems; we call the general class of such problems
\emph{feasible frequency-based problems} (see Section~\ref{Sec:constraining}). 

\paragraph*{Our Contributions}

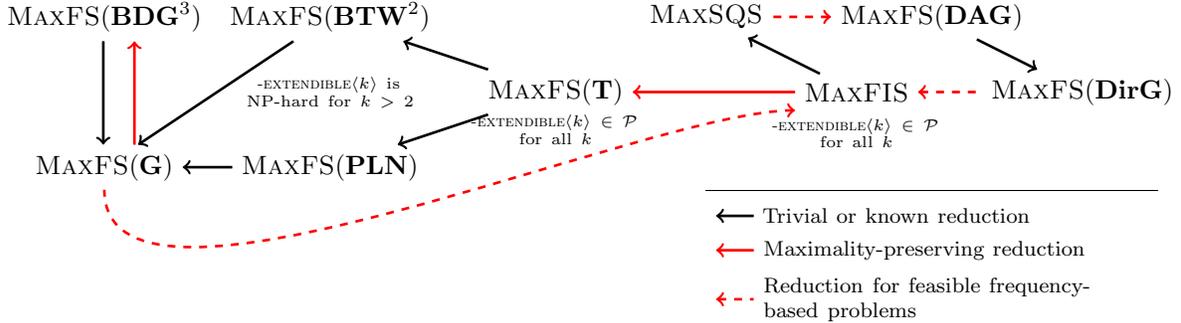
\begin{figure*}
  \begin{centering}
    \hspace*{-20mm}
    \tikzstyle{old}=[line width=1pt,draw=black,fill=black,<-]
\tikzstyle{newuc}=[line width=1pt,draw=red,fill=red,<-]
\tikzstyle{newc}=[line width=1pt,draw=red,dashed,<-]
\begin{tikzpicture}
  \node (MaxFSG) at (0, 1) {$\MaxFS{\Graphs}$};
  \node (MaxFSBDG) at (0, 3) {$\MaxFS{\BDG{3}}$};
  \draw [old] ([xshift=-.4]MaxFSG.north) -- ([xshift=-.4]MaxFSBDG.south);
  \draw [newuc] ([xshift=.4cm]MaxFSBDG.south) -- ([xshift=.4cm]MaxFSG.north);
  \node (MaxFSPLN) at (3, 1) {$\MaxFS{\PLN}$};
  \draw [old] (MaxFSG) -- (MaxFSPLN);
  \node (BTWext) at (3, 2) [text width=3cm,text centered] {\fontsize{6}{3}\selectfont $\extendiblek{}{k}$ is \NP-hard for $k > 2$\par};
  \node (MaxFSBTW) [above of=BTWext] {$\MaxFS{\BTW{2}}$};
  \draw [old] (MaxFSG) -- (MaxFSBTW);
  \node (MaxFST) at (6, 2) {$\MaxFS{\T}$};
  \node at ([yshift=-.2cm]MaxFST.south) [text width=3cm,text centered] {\fontsize{6}{3}\selectfont $\extendiblek{}{k}\in\P$ for all $k$\par};
  \draw [old] (MaxFSPLN) -- (MaxFST);
  \draw [old] (MaxFSBTW) -- (MaxFST);
  \node (MaxFIS) at (10, 2) {$\MaxFIS$};
  \node (MaxFISext) at ([yshift=-.3cm]MaxFIS.south) [text width=3cm,text centered] {\fontsize{6}{3}\selectfont $\extendiblek{}{k}\in\P$ for all $k$\par};
  \draw [newuc] (MaxFST) -- (MaxFIS);
  \draw [newc] (MaxFIS.south west) .. controls +(left:2cm) and +(down:2.5cm) .. (MaxFSG);
  \node (MaxSQS) at (8, 3) {$\MaxSQSn$};
  \draw [old] (MaxSQS) -- (MaxFIS);
  \node (MaxFSDAG) at (11, 3) {$\MaxFS{\DAG}$};
  \draw [newc] (MaxFSDAG) -- (MaxSQS);
  \node (MaxFSDirG) at (13, 2) {$\MaxFS{\DirG}$};
  \draw [old] (MaxFSDirG) -- (MaxFSDAG);
  \draw [newc] (MaxFIS) -- (MaxFSDirG);
  \matrix (legend) [column 2/.style={nodes={text width=5cm}}] at (11, -0.3) {
    \draw [old] (0,0) -- (.5,0); & \node {\footnotesize Trivial or known reduction}; \\
    \draw [newuc] (0,0) -- (.5,0); & \node {\footnotesize Maximality-preserving reduction}; \\
    \draw [newc] (0,0) -- (.5, 0); & \node {\footnotesize Reduction for feasible frequency-based problems\\}; \\
  };
  \draw (legend.north west) -- (legend.north east);
\end{tikzpicture}

    \caption{The hierarchy of maximal frequency-based problems with the results from this paper.
      Arrows point from the ``easier'' to the ``harder'' problem.
      See Section~\ref{sec:problems-in-this-paper} for the
      abbreviated problem names used in the picture.
      Maximality-preserving reductions are defined in Section~\ref{Sec:frequency-reductions}
      and feasible frequency-based problems are defined in Section~\ref{Sec:constraining}.}
    \label{fig:hierarchy}
  \end{centering}
\end{figure*}

We study a number of maximal pattern mining problems, including maximal subgraph
mining in labelled graphs (and in more restricted structures), maximal frequent
itemset mining, and maximal subsequence mining with no repetitions (see
Section~\ref{sec:problems-in-this-paper} for definitions of all of these
problems).
We summarize our results in 
Figure~\ref{fig:hierarchy}: the arrows show which problem can be reduced to
which other problem either using non-constraining reductions (black and red lines), or
with possible constraints on the feasible solutions (dashed lines). As can be seen in
Figure~\ref{fig:hierarchy}, all problems can be reduced to each other
(potentially with constraints). Given that the constrained reductions are
transitive (Lemma~\ref{lemma:ffpp-reductions-transitive}), we can state our main
result:

\begin{theorem}[Informal] Maximal subgraph mining in labelled graphs (and in
  more restricted structures), maximal frequent itemset mining, and
  maximal subsequence mining with no repetitions are equally hard problems
  when we are allowed to constrain the pattern space.
\end{theorem}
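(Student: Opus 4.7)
The plan is to derive the informal theorem as a direct corollary of the hierarchy in Figure~\ref{fig:hierarchy} together with the transitivity of feasible frequency-based reductions (Lemma~\ref{lemma:ffpp-reductions-transitive}). The statement asserts mutual reducibility of \MaxFS{\Graphs} (and the more restricted graph variants), \MaxFIS, and \MaxSQSn{} within the feasible frequency-based framework, so the task reduces to exhibiting, for every pair of problems, a chain of arrows in both directions where each arrow is either a maximality-preserving reduction (which is a special case of a feasible reduction) or a feasible one.

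First, I would verify that the figure already supplies the necessary cycles. For \MaxFIS{} versus \MaxFS{\Graphs}: the chain $\MaxFS{\Graphs} \to \MaxFS{\PLN} \to \MaxFS{\T} \to \MaxFIS$ combines known/trivial reductions with the red maximality-preserving reduction $\MaxFS{\T} \to \MaxFIS$, while the dashed arrow $\MaxFIS \to \MaxFS{\Graphs}$ closes the loop. For \MaxFIS{} versus \MaxSQSn: use the trivial $\MaxSQSn \to \MaxFIS$ one way, and $\MaxFIS \to \MaxFS{\DirG} \to \MaxFS{\DAG} \to \MaxSQSn$ (dashed, trivial, dashed) the other way. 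The remaining graph classes $\BDG{3}, \BTW{2}, \PLN, \T, \DAG, \DirG$ are sandwiched between \MaxFS{\Graphs} and \MaxFIS{} by the same arrows (the red arrow $\MaxFS{\BDG{3}} \to \MaxFS{\Graphs}$ in particular collapses the bounded-degree branch). Second, since every maximality-preserving reduction is also a feasible frequency-based reduction, every edge in these chains is admissible, and a single application of Lemma~\ref{lemma:ffpp-reductions-transitive} per composition yields the desired feasible reduction between any two problems on the list.

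The main obstacle is not this bookkeeping — given the figure, the proof of the main theorem itself is nearly mechanical — but rather the prior construction of the individual red arrows, which are the technical content of the paper. In particular, the dashed reduction $\MaxFIS \to \MaxFS{\Graphs}$ must encode an itemset instance as a labelled-graph instance so that maximal frequent itemsets biject with the \emph{feasible} maximal frequent subgraphs, and so that no spurious subgraphs (e.g.\ non-induced fragments of the encoding gadgets) appear in the output; this is precisely why pattern-space constraints are unavoidable, and the same design issue recurs for $\MaxFIS \to \MaxFS{\DirG}$ and $\MaxFS{\DAG} \to \MaxSQSn$. Once these reductions are in hand, the informal theorem falls out as a corollary.
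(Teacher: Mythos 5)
Your proposal matches the paper's own treatment: the informal theorem is presented there exactly as a corollary of the reduction hierarchy in Figure~\ref{fig:hierarchy}, the lemma that maximality-preserving reductions for frequency-based problems lift to \ffpp{} reductions, and the transitivity lemma (Lemma~\ref{lemma:ffpp-reductions-transitive}); your cycle-chasing through the figure is the intended argument, and you correctly locate the genuine technical content in the individual reduction lemmas. One caveat: in your final paragraph you describe the constrained reduction between $\MaxFIS$ and $\MaxFS{\Graphs}$ in the wrong direction --- the paper's dashed arrow (Lemma~\ref{Thm:MaxFSGraphsFFIS}) encodes a \emph{graph} database as an \emph{itemset} database (one item per labelled edge) and uses the feasibility function to enforce connectedness of the recovered subgraph, whereas the itemsets-to-graphs direction is obtained constraint-free by composing the star-tree encoding of Lemma~\ref{Thm:MaxFIS} with the trivial inclusion $\T \subseteq \Graphs$; the analogous reversal affects your reading of the $\MaxFS{\DirG}$ and $\MaxFS{\DAG}$ edges. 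This does not invalidate the corollary, since reductions in both directions do exist along the cycles you identify, but the ``why constraints are unavoidable'' explanation belongs to the graphs-to-itemsets direction, not the one you name.
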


In some sense, our results unify all existing hardness results for frequency-based problems
by putting them into a general framework using maximality-preserving reductions.
These reductions preserve all interesting theoretical aspects
like \NP- or \sharpP-hardness, but are also restricted enough to maintain the special
properties of the transactions.

In fact, from a practical point of view, our reductions show that if we have an algorithm that can
effectively find, say, the maximal frequent itemsets that admit the constraints
from the reductions, we can use that algorithm to solve maximal frequent
subgraph mining and maximal frequent subsequence mining problems efficiently.
Luckily, as we will see in Section~\ref{sec:algor-exper}, the constrained maximal patterns
are indeed easy to mine in practice.
Alternatively, the reductions can be used to guide how
ideas from algorithms for one set of problems can be transferred to algorithms
for the other set of problems (e.g. from frequent subsequence mining to frequent
subgraph mining or vice versa). 


\ifICDM{
  Due to space constraints, some proofs and empirical evaluation are postponed to the complete technical report~\cite{neumann17reductions_arxiv}.
}{%
  This paper is an extended version of our conference paper~\cite{neumann17reductions}. It contains all omitted proofs, and experimental evaluation.
  }
  
\paragraph*{Outline of the Paper}

We will cover the basic definitions and frameworks used in this paper in
Section~\ref{Sec:preliminaries}, where we will also formally define the problems
we are working with.
Section~\ref{Sec:complexity} presents related work and existing hardness results for
the problems we consider.
We introduce the (unconstrained) maximality-preserving
reductions in Section~\ref{Sec:frequency-reductions}. In particular, the reductions
corresponding to the solid red lines in Figure~\ref{fig:hierarchy} are presented in Section~\ref{Sec:reductions}.
The feasible frequency-based problems, and the
corresponding constrained reductions (dashed red lines in Figure~\ref{fig:hierarchy}) and related results are presented in
Section~\ref{Sec:constraining}.
In Section~\ref{sec:algor-exper} we show that our reductions
can be used in practice and yield efficient algorithms.


\section{Preliminaries}
\label{Sec:preliminaries}

In this section we quickly cover the basic definitions of frequency-based problems, enumeration problems, and counting complexity. In addition, we present the definitions of the problems we consider in the paper.

\subsection{Frequency-based Problems}
\label{Sec:freq-based-problem}
A \emph{frequency-based} problem $\P$ consists of\footnote{A similar definition was given in Gunopulos et al.~\cite{gunopulos2003discovering}.}:
\begin{itemize}
	\item A set of labels $\L$; for example, $\L = \{1, \dots, n\}$.
	\item A set $\transactions{\P}$ consisting of possible transactions
			over the labels $\L$.
	\item A set $\patterns{\P} \subseteq \transactions{\P}$ of possible patterns
			over the labels $\L$.
	\item A partial order $\sqsubseteq$ over $\transactions{\P}$.
\end{itemize}

Given a frequency-based problem $\P$,
a \emph{database} $D_{\P}$ is a finite multiset of elements from $\transactions{\P}$.
For a database $D_{\P}$ and a \emph{support threshold} $\tau$, a pattern $p \in \patterns{\P}$
is called $\tau$-\emph{frequent} if
\[\supp(p, D_{\P}) := |\{ t \in D_{\P} : p \sqsubseteq t \}| \geq \tau .\]
In other words, a pattern $p$ is frequent if it appears in at least $\tau$ transactions
of the database. When $\tau$ is clear from the context, we will call $p$ only \emph{frequent}.
A pattern $p \in \patterns{\P}$ is a \emph{maximal frequent} pattern if $p$ is
frequent and all patterns $q \in \patterns{\P}$ with $p \sqsubset q$ are not frequent.
Given a database $D_{\P}$, we denote the set of all maximal frequent patterns by
$\Max{D_{\P}, \tau}$, i.e.,
$\Max{D_{\P}, \tau} = \{ p \in \patterns{\P} : p\text{ is a maximal } \tau\text{-frequent pattern in } D_{\P} \}$.

When the parameter $\tau$ is not part of the input but fixed to some integer,
we write $\P^{\tau}$ to denote the resulting problem.

\subsection{Enumeration Problems}
\label{sec:enumeration-problems}
An \emph{enumeration relation $\R$} is a set of strings
$\R = \{ (x,y) \} \subset \{0,1\}^* \times \{0,1\}^*$
such that
\begin{align*}
	\R(x) := \{ y \in \{0,1\}^* : (x,y) \in \R \}
\end{align*}
is finite for every $x$. A string $y \in \R(x)$ is called a \emph{witness} for $x$.
We call $\R$ an \emph{\NP-relation} if (1) there exists a polynomial $p$ such that
$|y| \leq p(|x|)$ for all $(x,y) \in \R$, and
(2) there exists a polynomial-time algorithm deciding if $(x,y) \in \R$ for
any given pair $(x,y)$.

Following~\cite{kimelfeld2014complexity}, we define the following problems for an enumeration relation $\R$:
\begin{itemize}
\item $\enum{\R}$: The input is a string $x$.
	The task is to output the set $\R(x)$ without repetitions.
\item $\extend{\R}$:
	The input is a string $x$ and a set $Y \subseteq \R(x)$.
	The task is to compute a string $y$ such that $y \in \R(x) \setminus Y$
	or to output that no such element exists.
\item $\extendible{\R}$:
	The input is a string $x$ and a set $Y \subseteq \R(x)$.
	The task is to decide whether $\R(x) \setminus Y \neq \emptyset$.
\item $\extendiblek{\R}{k}$: The input is a string $x$ and a set
	$Y \subseteq \R(x)$ with the restriction that $|Y| < k$.
	The task is to decide whether $\R(x) \setminus Y \neq \emptyset$.
\end{itemize}
The problem $\extend{\R}$ can be viewed as the decision version of $\extend{\R}$.
Note that by repeatedly running an algorithm for
$\extend{\R}$, one can solve $\enum{\R}$.
Further observe that any algorithm solving $\extend{\R}$ can be used
to solve $\extendible{\R}$.

\paragraph*{Enumeration Complexity}
Johnson et al.~\cite{johnson88generating} introduced different notions for the
complexity of enumeration problems. Let $\R$ be an enumeration relation. An
algorithm solving $\enum{\R}$ is called an \emph{enumeration algorithm}.

For enumeration problems it might be the case that the output $\R(x)$ is
exponentially larger than the input $x$. Due to this, measuring the
running time of an enumeration algorithm only as a function of $|x|$ can be too
restrictive; instead, one can include the size of $\R(x)$ in the complexity
analysis. Then the running time of an algorithm is measured as function of
$|x| + |\R(x)|$. This consideration gives rise to the following definitions:
\begin{itemize}
	\item An enumeration algorithm runs in \emph{total polynomial time} if its
		running time is polynomial in $|x| + |\R(x)|$. 
	\item An enumeration algorithm has \emph{polynomial delay} if
		the time spent between outputting two consecutive witnesses of $\R(x)$ is always
		polynomial in $|x|$.
	\item An enumeration algorithm runs in \emph{incremental polynomial time} if
		on input $x$ and after outputting a set $Y \subseteq \R(x)$ it takes
		time polynomial in $|x| + |Y|$ to produce the next witness from
		$\R(x) \setminus Y$.
\end{itemize}

We note that $\enum{\R}$ is in incremental polynomial time if and only if $\extend{\R}$ is
in polynomial time. Additionally, observe that a polynomial total time algorithm
can be used to decide if $\R(x) \neq \emptyset$.

\paragraph*{Relationship to Frequency-Based Problems}
We note that frequency-based problems are special cases of enumeration problems.
Let $\P$ be a frequency-based problem.
We define the enumeration relation $\R$ corresponding to $\P$ by setting
\begin{align*}
	\R = \{ (x,y) : x = (D_{\P}, \tau), y \in \Max{ D_{\P}, \tau } \},
\end{align*}
i.e., $\R$ consists of all possible databases $D_{\P}$, support thresholds
$\tau$ and all maximal frequent patterns $y$ for the tuples $(D_{\P}, \tau)$.

Observe that $\R(x) = \R( D_{\P}, \tau ) = \Max{ D_{\P}, \tau }$ and, hence, the
problem $\enum{\R}$ is exactly the same problem as outputting all maximal
frequent patterns in $\Max{ D_{\P}, \tau }$.
The problem $\extend{\R}$ is to output a maximal frequent pattern
in $\Max{ D_{\P}, \tau} \setminus Y$ for a given set of maximal patterns $Y$.
The problems $\extendible{\R}$ and $\extendiblek{\R}{k}$ are the corresponding
decision versions of the problems.

Since $\R$ and $\P$ yield the same enumeration problems, we will also
write $\enum{\P}$, $\extendible{\P}$, $\extend{\P}$, $\extendiblek{\P}{k}$.
Often we will write $\P$ to denote the problem $\enum{\P}$.

 \subsection{Counting Complexity}
 \label{sec:counting-complexity}
 
For a given enumeration relation $\R$, the function $\sharpR : \{0,1\}^* \to \mathbb{N}$
returns the number of witnesses for a given string, i.e., $\sharpR(x) = |\R(x)|$
for $x \in \{0,1\}^*$.
The complexity class $\sharpP$ (pronounced ``sharp P'') contains all functions $\sharpR$ for which $\R$ is an
\NP-relation; it was introduced by Valiant~\cite{valiant79complexity}.
A function $F : \{0,1\}^* \to \mathbb{N}$ is \emph{$\sharpP$-hard} if there exists
a Turing reduction from every function in $\sharpP$ to $F$.

For two \NP-relations $\R, \Q : \{0,1\}^* \to \mathbb{N}$,
a \emph{parsimonious reduction from $\sharpR$ to $\#\Q$} is a polynomial-time computable function 
$f : \{0,1\}^* \to \{0,1\}^*$ such that 
$\sharpR(x) = \#\Q(f(x))$ for all $x \in \{0,1\}^*$.
Note that a parsimonious reduction from a $\sharpP$-hard problem $\R$
to a problem $\Q$ implies that $\Q$ is $\sharpP$-hard.

An example for a $\sharpP$-hard problem is counting \emph{the number} of satisfying
assignments of a SAT formula. Note that such an algorithm can
decide if the given formula is satisfiable or not (by checking if the number of satisfying
assignments is larger than $0$). Hence, $\sharpP$ is a superset of $\NP$.

In fact, Toda and Ogiwara~\cite{toda92counting} showed that all problems in the
polynomial-time hierarchy can be solved in polynomial-time when one has
access to an oracle for a \sharpP-hard function.

Observe that an algorithm solving $\enum{\R}$ can solve $\#\R$ by
counting the number of witnesses in its output.

\subsection{Problems Considered in This Paper}
\label{sec:problems-in-this-paper}
All problems considered in this paper are frequency-based problems.
For the sake of brevity, we only define $\L$, $\transactions{\cdot}$,
$\patterns{\cdot}$, and $\sqsubseteq$ for each problem (see, e.g., \cite{aggarwal15data} for more thorough definitions).

The \emph{maximal frequent itemset mining} problem, denoted as \MaxFIS, is as follows:
We have $n$ labels $\L = \{ 1, \dots, n \}$;
$\transactions{\MaxFIS}$ and $\patterns{\MaxFIS}$ are given by $2^{\L}$;
$\sqsubseteq$ is the standard subset relationship $\subseteq$.

The \emph{maximal frequent subsequence mining} problem, denoted as $\MaxSQSn$, is as follows:
$\L = \{ 1, \dots, n \}$ is the set of labels.
A \emph{sequence} $S = \langle S_1, \dots, S_m \rangle$ of length $m$
consists of $m$ events $S_i$ with $S_i \in \L$;
we require that \emph{each label appears at most once per sequence}.
The sets $\transactions{\MaxSQSn}$ and $\patterns{\MaxSQSn}$ are the sets consisting of all
sequences of arbitrary lengths.
For two sequences $S = \langle S_1, \dots S_r \rangle$ and $T = \langle T_1, \dots, T_k \rangle$,
we have $T \sqsubseteq S$ if $k \leq r$ and there exist indices $1 \leq i_1 \leq \cdots \leq i_k \leq r$
such that $T_j = S_{i_j}$ for each $j = 1, \dots, k$.

Let $\G$ be a class of \emph{vertex-labelled} graphs, which contain
each label at most once.
The \emph{maximal frequent subgraph mining} problem, $\MaxFS{\G}$, is as follows:
We have $n$ labels $\L = \{ 1, \dots, n \}$;
$\transactions{\MaxFS{\G}}$ and $\patterns{\MaxFS{\G}}$ are given by all
labelled graphs in $\G$ with labels from $\L$;
$\sqsubseteq$ is the standard subgraph relationship for labelled graphs (i.e.,
we consider arbitrary subgraphs, not necessarily induced subgraphs).

In the remainder of the paper, we will consider the following graph classes, all of which
are labelled and connected:
\begin{itemize}
\item $\T$ --- undirected trees,
\item $\BDG{b}$ --- undirected graphs of bounded degree at most $b$,
\item $\BTW{w}$ --- undirected graphs of bounded treewidth at most $w$,
\item $\PLN$ --- undirected planar graphs,
\item $\Graphs$ --- general undirected graphs,
\item $\DAG$ --- directed acyclic graphs,
\item $\DirG$ --- directed graphs.
\end{itemize}

Throughout the paper we will only consider labelled graphs \emph{in which
each label appears at most once}. In this restricted setting, the subgraph
isomorphism problem can be solved in polynomial-time.
This a necessary condition for our reductions to work since
Kimelfeld and Kolaitis~\cite[Prop.~3.4]{kimelfeld13complexity} showed that for certain
\emph{unlabelled} graph classes $\G$, $\MaxFS{\G}$ is not an \NP-relation.

\section{Related Work}
\label{Sec:complexity}

\paragraph*{Counting Complexity}
The study of counting problems was initiated when
Valiant~\cite{valiant79complexity} introduced \sharpP.
Provan and Ball~\cite{provan83complexity} showed \sharpP-hardness for many graph
problems such as counting the number of maximal independent sets in bipartite
graphs. Later, more \sharpP-hardness results were obtained for even more restricted
graph classes~\cite{hunt98complexity,vadhan09complexity}.

Johnson et al.~\cite{johnson88generating} introduced the notions of polynomial
total time, polynomial delay, and incremental polynomial time to obtain a better
understanding of the computational complexity of enumeration problems.

\paragraph*{Computational Complexity of Data Mining Problems}
Gunopulos et al.~\cite{gunopulos2003discovering} introduced a general
class of problems similar to frequency-based problems.
For this class of problems, they proved $\sharpP$-hardness for mining frequent sets,
and provided an algorithm to mine maximal frequent sets.

Yang~\cite{yang2004complexity} proved $\sharpP$-hardness for determining
the number of \emph{maximal} frequent itemsets and other problems.
\begin{theorem}[Yang~\cite{yang2004complexity}]
\label{Thm:yang2004complexity}
	The following problems are \sharpP-complete:
	$\MaxFIS$, $\MaxFS{\T}$, $\MaxFS{\Graphs}$, $\MaxSQSn$.
\end{theorem}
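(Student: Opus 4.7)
My plan is to prove the four problems are \sharpP-complete by handling membership and hardness separately, and to obtain hardness for the four problems by parsimoniously reducing a known \sharpP-hard problem into $\MaxFIS$ and then chaining parsimonious reductions along the edges shown in Figure~\ref{fig:hierarchy}.

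For membership in \sharpP, I would fix one of the four problems and argue that the associated enumeration relation $\R$ is an \NP-relation: each maximal frequent pattern has size bounded by the size of the database, and given a candidate pattern $p$ one can check in polynomial time that (i) $p$ lies in $\patterns{\P}$, (ii) $\supp(p,D_\P)\geq\tau$, and (iii) no immediate extension of $p$ in $\patterns{\P}$ is frequent. The crucial ingredient here is polynomial-time evaluation of the partial order $\sqsubseteq$, i.e.\ polynomial-time subgraph/subsequence/subset testing. For \MaxFIS{} and \MaxSQSn{} this is immediate, and for $\MaxFS{\T}$ and $\MaxFS{\Graphs}$ it follows from the standing assumption that each label appears at most once, which reduces the labelled subgraph-isomorphism test to a matching of distinct labels plus an edge-preservation check (this is exactly the reason the excerpt cites for restricting to uniquely labelled graphs).

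For hardness, the cleanest route is to give a parsimonious reduction from counting maximal independent sets in bipartite graphs, which is \sharpP-hard by Provan and Ball~\cite{provan83complexity}, to $\MaxFIS$. Given a bipartite graph $G=(U\cup V,E)$ I would build a database whose items are the vertices $U\cup V$ and whose transactions encode the ``no edge crosses'' property in such a way that the maximal frequent itemsets are in bijection with the maximal independent sets of $G$; the standard device is to create one transaction per edge whose item set is $(U\cup V)\setminus\{u,v\}$ and to choose the threshold $\tau$ so that an itemset $S$ is frequent exactly when no edge has both endpoints in $S$, i.e.\ when $S$ is independent. The main obstacle here is fine-tuning the construction (threshold choice, possibly padding transactions) so that the correspondence is exact and, in particular, preserves maximality rather than merely frequency.

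Having established $\sharpP$-hardness of \MaxFIS, I would obtain hardness of the remaining three problems by parsimonious reductions from $\MaxFIS$. For $\MaxSQSn$, fix a total order on the items and represent each itemset (as a transaction and as a pattern) by the sequence of its elements in that order; since every label occurs at most once in a sequence, the subsequence relation restricted to order-consistent sequences coincides with the subset relation, giving an exact bijection on maximal frequent patterns. For $\MaxFS{\T}$ I would encode itemsets as labelled stars (or caterpillars) around a dedicated centre vertex that is forced to appear in every transaction and every nontrivial pattern, so that the labelled-subgraph relation among these trees coincides with the subset relation on leaf-label sets; the obstacle, and the step I expect to be most delicate, is ruling out spurious maximal subgraphs that break the bijection (for instance, isolated labels or partial stars without the centre), which is why a careful transaction gadget and possibly an inflation of the support threshold are needed. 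Finally, hardness of $\MaxFS{\Graphs}$ follows immediately since $\T\subseteq\Graphs$, so the reduction to $\MaxFS{\T}$ is already a reduction to $\MaxFS{\Graphs}$. Composing these parsimonious reductions with the one into \MaxFIS{} yields \sharpP-hardness, and together with the membership argument the four stated \sharpP-completeness results follow.
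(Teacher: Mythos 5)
The paper itself does not prove this statement: it is imported verbatim from Yang~\cite{yang2004complexity}, so there is no internal proof to compare against. Judged on its own, your overall architecture is reasonable. The membership half is fine: for each of the four problems the associated relation is an \NP-relation because witnesses are bounded by the database size, $\sqsubseteq$ is polynomial-time decidable (for $\MaxFS{\T}$ and $\MaxFS{\Graphs}$ precisely because of the standing unique-label assumption), and by anti-monotonicity of support maximality only requires testing the polynomially many one-step extensions of a candidate pattern. The downstream reductions you chain out of $\MaxFIS$ are essentially the constructions the paper itself gives in Lemmas~\ref{Thm:MaxFIS} and~\ref{Thm:ReductionMaxSQS}, and your worry about spurious maximal subtrees in the star encoding is resolvable: every transaction is a star through the root $r$, so a frequent pattern omitting $r$ always extends to one containing $r$ with the same support and is therefore never maximal.

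The genuine gap is the linchpin: the reduction from counting maximal independent sets in bipartite graphs to $\MaxFIS$. With one transaction $t_e=(U\cup V)\setminus\{u,v\}$ per edge $e=(u,v)$, the support of an itemset $S$ is $|\{e: e\cap S=\emptyset\}| = m - \sum_{w\in S}\deg(w) + |\{e : e\subseteq S\}|$. No threshold makes this coincide with independence: the value depends on the degrees of the chosen vertices, and the term $|\{e\subseteq S\}|$ enters with the \emph{wrong sign}, so an adjacent pair $\{u,v\}$ has support one larger than a non-adjacent pair of the same degrees --- frequency, if anything, favours dependent sets. Padding with co-singleton transactions $W\setminus\{w\}$ cannot repair this either: any database built from co-edge and co-singleton transactions yields a support function of the form $a\,|\{e:e\cap S=\emptyset\}|+\sum_{w\notin S}b_w+c$ with non-negative multiplicities, whereas expressing the desired predicate $\sum_e[e\not\subseteq S]$ in the basis of containment indicators requires the expansion $[u\notin S]+[v\notin S]-[S\subseteq W\setminus\{u,v\}]$, i.e.\ a negative multiplicity on the co-edge transaction. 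So ``frequent iff independent'' is not achievable by the gadget family you propose, and the fine-tuning you defer is not a technicality but the entire difficulty; a correct argument needs a different encoding (for instance one that exploits the bipartite structure and identifies maximal frequent itemsets with maximal bicliques, i.e.\ closed sets, of the transaction--item incidence matrix). Until that step is replaced, the hardness half of all four completeness claims is unsupported.
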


Boros et al.~\cite{boros2003maximal} showed that given a set of maximal
frequent itemsets $Y$, it is \NP-complete to decide whether there exists another
maximal frequent itemset that is not contained in $Y$.
\begin{theorem}[Boros et al.~\cite{boros2003maximal}]
	$\extendible{\MaxFIS}$ and $\extend{\MaxFIS}$ are \NP-complete.
\end{theorem}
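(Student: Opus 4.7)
The plan is to prove the two parts of the statement separately. For membership in \NP, given an instance $(D_{\P}, \tau, Y)$ and a candidate pattern $p \subseteq \L$, I would verify in time polynomial in $|D_{\P}| + |Y|$ that (i) $p \notin Y$, (ii) $\supp(p, D_{\P}) \geq \tau$ by scanning the transactions, and (iii) for each $i \in \L \setminus p$, the extension $p \cup \{i\}$ is not frequent. Since any solution to $\extend{\MaxFIS}$ immediately solves $\extendible{\MaxFIS}$ (output ``yes'' iff a witness is returned), the \NP-hardness of $\extend{\MaxFIS}$ will follow from that of $\extendible{\MaxFIS}$.

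For \NP-hardness, I would reduce from SAT. Given a CNF formula $\phi$ over variables $x_1,\dots,x_n$ with clauses $C_1,\dots,C_m$, I would take as labels $\L = \{x_1,\bar{x}_1,\dots,x_n,\bar{x}_n\}$ together with a small number of auxiliary labels, and construct a database $D_{\P}$ and threshold $\tau$ so that:
\begin{itemize}
  \item frequency enforces \emph{consistency}: no frequent itemset contains both $x_i$ and $\bar{x}_i$, implemented by having $\tau$ carefully chosen and transactions that fail to support any inconsistent pair;
  \item clause gadgets ensure that a \emph{complete} consistent itemset (one containing exactly one of $x_i,\bar{x}_i$ for every $i$) is frequent if and only if the induced assignment satisfies $\phi$;
  \item there is an explicitly constructible, polynomial-size family $Y$ of ``trivial'' maximal frequent itemsets --- for example, itemsets corresponding to partial assignments that avoid committing to some variable --- such that every maximal frequent itemset not in $Y$ corresponds to a satisfying assignment.
\end{itemize}
Under this encoding, $(D_{\P},\tau,Y)$ has a solution to $\extendible{\MaxFIS}$ iff $\phi$ is satisfiable.

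The main obstacle will be engineering the transactions so that the ``trivial'' maximal frequent itemsets form a set $Y$ that is (a) small enough to be listed in polynomial time from $\phi$ alone, and (b) exhaustive enough that the only remaining maximal frequent itemsets are those encoding satisfying assignments. A natural approach is to introduce clause-indicator labels and transactions that make every unsatisfied partial assignment maximal in a prescribed, enumerable way, so that $Y$ can be described by a short combinatorial template while satisfying assignments yield genuinely new maximal frequent itemsets. Verifying that the construction has exactly the intended maximal frequent itemsets --- and no unintended ones --- is the delicate combinatorial step that will occupy the bulk of the proof.
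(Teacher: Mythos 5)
First, note that the paper does not prove this statement: it is quoted from Boros et al.~\cite{boros2003maximal} as a known result, so there is no in-paper proof to compare your attempt against. Your membership argument is fine as far as it goes: a witness for $\extendible{\MaxFIS}$ is an itemset $p \notin Y$ that is frequent and such that $p \cup \{i\}$ is infrequent for every $i \in \L \setminus p$; by anti-monotonicity of support this single-item check does certify maximality, so the underlying relation is an \NP-relation and $\extendible{\MaxFIS}$ is in \NP. The observation that hardness transfers from $\extendible{\MaxFIS}$ to $\extend{\MaxFIS}$ is also standard.

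The hardness half, however, is a statement of intent rather than a proof, and the specific plan you sketch runs into exactly the obstacle you yourself name without resolving it. If frequency forbids $\{x_i,\bar{x}_i\}$ and complete consistent itemsets are frequent iff they encode satisfying assignments, then every consistent partial assignment that admits no frequent extension becomes a maximal frequent itemset, and there is no reason for these to be polynomially many or to fit a ``short combinatorial template'': in such gadget databases the number of maximal frequent itemsets is generically exponential, and the problem definition requires $Y$ to be an explicitly listed subset of the \emph{actual} maximal frequent itemsets of the constructed instance. The entire content of the Boros--Gurvich--Khachiyan--Makino theorem is the construction of transactions and a threshold for which the spurious maximal frequent sets are provably few and efficiently enumerable while the remaining ones are in bijection with satisfying assignments; until you exhibit that database, the list $Y$, and a verification that no unintended maximal frequent itemset exists, you have restated the theorem rather than proved it.
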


Kimelfeld and Kolaitis~\cite{kimelfeld13complexity} proved structural results on mining
frequent subgraphs of certain graph classes.
Their results allow to distinguish the computational complexities of
$\MaxFS{\T}$ and $\MaxFS{\G}$ where
$\G$ is either $\Graphs$, $\PLN$, $\BDG{b}$ with $b > 2$, or $\BTW{w}$ with $w > 1$.
This is also depicted in Figure~\ref{fig:hierarchy}.

\begin{theorem}[Kimelfeld and Kolaitis~\cite{kimelfeld2014complexity}]
\label{Thm:kimelfeld2014complexity}
	For every fixed $k$, the problem $\extendiblek{\MaxFS{\T}}{k}$ can
	be solved in polynomial time.

	For every fixed $\tau$, the problem $\enum{\MaxFStau{\G}}$ can be solved in
	polynomial time for any class of graphs $\G$ from Section~\ref{sec:problems-in-this-paper}.

	The following problems are \NP-complete:
	\begin{itemize}
		\item $\extendible{\MaxFS{\G}}$ for 
	   		$\G \in \{ \Graphs, \PLN, \BDG{b}, \BTW{w} \}$
			with $w \geq 1$ and $b \geq 3$.
		\item $\extendiblek{\MaxFS{\G}}{k}$ for 
	   		$\G \in \{ \Graphs, \PLN, \BDG{b}, \BTW{w} \}$
			with $w > 1$ and $b > 2$ and for every $k > 2$.
	\end{itemize}
\end{theorem}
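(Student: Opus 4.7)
The plan is to address the four claims separately with different techniques.

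For $\enum{\MaxFStau{\G}}$ with fixed $\tau$, the approach is brute-force enumeration of $\tau$-subsets of the database. Any $\tau$-frequent pattern must embed into at least $\tau$ transactions, and with $\tau$ constant there are only $O(|D_{\P}|^{\tau})$ such subsets. For each $\tau$-subset $S$ I would compute the common structure --- easy in the one-label-per-vertex regime by intersecting vertex-label sets and keeping only edges that appear between those labels in every transaction in $S$ --- and then enumerate the maximal elements of $\patterns{\MaxFS{\G}}$ contained in it. Taking the union over all $\tau$-subsets yields a superset of $\Max{D_{\P},\tau}$ which can be pruned to the genuinely maximal frequent patterns by a final frequency and maximality check. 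Correctness follows because every maximal frequent pattern is witnessed by some $\tau$-subset of transactions that all contain it.

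For $\extendiblek{\MaxFS{\T}}{k}$ with fixed $k$, I would exploit the simplicity of trees: for every labelled vertex $v$ appearing in the database, greedily extend $\{v\}$ into a frequent subtree by repeatedly adding a neighbouring label that keeps the current subtree both a subtree of at least $\tau$ transactions and connected. Subtree isomorphism is tractable under the one-label-per-vertex assumption, so each extension step runs in polynomial time; different greedy choices produce different maximal frequent subtrees. Since $|Y| < k = O(1)$, comparing each produced subtree against $Y$ is cheap, and a positive instance arises exactly when some branch of this search produces a maximal frequent subtree not in $Y$.

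The NP-completeness claims would be handled by reductions from $3$-SAT (or an extension-style problem already known to be NP-hard). Given a formula $\phi$, I would build a database of labelled graphs whose maximal frequent patterns split into a small ``trivial'' family together with one additional pattern for each satisfying assignment of $\phi$; seeding $Y$ with the trivial family makes $\extendible{\MaxFS{\G}}$ equivalent to satisfiability. For each restricted class one must engineer gadgets that simultaneously (i)~encode variable selection and clause satisfaction, (ii)~respect planarity, degree bound $b$, or treewidth bound $w$, and (iii)~do not create spurious maximal frequent patterns outside the intended family. Membership in \NP{} is immediate, since a candidate witness has polynomial size and both frequency and maximality are verifiable in polynomial time under the one-label-per-vertex assumption.

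The main obstacle is the last bullet: NP-hardness of $\extendiblek{\MaxFS{\G}}{k}$ for small constants $k > 2$. The reduction may list at most $k-1$ ``visible'' maximal frequent patterns in $Y$, so essentially all of the combinatorial content of $\phi$ must be absorbed into a \emph{single} additional witness pattern. Designing local gadgets that encode this while preserving $\BTW{w}$ for $w > 1$ or $\BDG{b}$ for $b > 2$ without leaking extra maximal frequent patterns (which would either inflate $Y$ beyond the budget or prematurely witness extendibility) is the delicate engineering step where the tightness of the parameters $w$, $b$, and $k$ must all be respected at once.
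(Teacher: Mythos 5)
First, a structural point: the paper does not prove this theorem. It is quoted as background from Kimelfeld and Kolaitis~\cite{kimelfeld2014complexity}, so there is no internal proof to compare against and your proposal has to stand on its own. Of its four claims, only the fixed-$\tau$ part of your outline is essentially sound: every maximal $\tau$-frequent pattern $p$ with support set $T_p$ is, by maximality, exactly the connected component containing $p$ of the intersection of \emph{any} $\tau$ transactions from $T_p$ (that component is itself frequent, so it cannot strictly exceed $p$); hence the polynomially many components of the $O(|D|^{\tau})$ pairwise-intersection graphs form a complete candidate set that can be filtered for frequency and maximality. That is the standard argument and it works in the one-label-per-vertex regime.

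The remaining claims have genuine gaps. For $\extendiblek{\MaxFS{\T}}{k}$, a greedy extension only ever produces \emph{some} maximal frequent subtree, which may already lie in $Y$; to decide the problem you must either exhibit a maximal pattern outside $Y$ or certify that none exists, and the phrase ``some branch of this search'' quantifies over exponentially many greedy runs with no mechanism for steering away from the at most $k-1$ forbidden patterns. The argument that actually works first observes that $\Max{D,\tau}\setminus Y\neq\emptyset$ if and only if there exists a \emph{frequent} pattern contained in no member of $Y$ (any maximal extension of such a pattern lies outside $Y$; conversely, if every frequent pattern sits below some $y\in Y$, then every maximal one equals some $y$), and then exploits the tree structure to search for such a pattern in polynomial time when $|Y|<k$ is constant; none of this reduction appears in your sketch. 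For the two \NP-hardness bullets you have written the \emph{specification} of the gadgets rather than the gadgets: the entire content of those claims is a concrete construction that encodes 3-SAT while keeping degree at most $3$ (respectively treewidth $2$, planarity), creates no spurious maximal frequent patterns, and, for the $\extendiblek{}{k}$ version, compresses all of the formula's structure into a single witness beyond a constant-size $Y$. Since the analogous hardness statement is false for $\T$ (that is the first claim of the very same theorem), no generic reduction can suffice; a proof must actually exhibit these constructions, and your proposal correctly identifies this as the hard step but does not carry it out.
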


In the journal version~\cite{kimelfeld2014complexity} of their
paper~\cite{kimelfeld13complexity}, Kimelfeld and Kolaitis give
computational hardness results for subgraph mining problems in which the
set \patterns{} is more restricted than \transactions{}. For example, they consider
the computational complexity of mining maximal subtrees from planar graphs.
They also consider mining unlabelled maximal subgraphs.

\paragraph*{Mining Maximal Frequent Patterns}
Many algorithms were proposed to mine maximal frequent patterns from different
types of data such as itemsets~\cite{han04mining,burdick05mafia,bayardo98efficiently},
subsequences~\cite{agrawal95mining},
trees~\cite{zaki05efficiently,xiao03efficient},
and general graphs~\cite{kuramochi01frequent}.
However, the main focus of those papers was not to investigate the computational
complexity of these problems.
See (for example) the book by Aggarwal~\cite{aggarwal15data} for many more references to
algorithms for efficiently computing maximal frequent patterns.

\paragraph*{Constraint-based Pattern Mining}
Many algorithms were proposed to mine frequent patterns with
constraints on the structure of the
patterns~\cite{ng98exploratory,grahne00efficient,bonchi03exante,bonchi04closed,garofalakis99spirit,pei02mining,bonchi09constraint}.
Due to lack of space we cannot
review all of them, but refer to Han et al.~\cite{han07frequent} for references to many
papers on constrained pattern mining.
Greco et al.~\cite{greco08mining} presented techniques for mining taxonomies of
process models which can also be viewed as constraint-based pattern mining.
The work on constraint programming for itemset mining by
Raedt et al.~\cite{raedt08constraint} and follow-up work (e.g.~\cite{guns13pattern})
can also be used to mine itemsets or other frequency-based problems with constraints.

\section{Maximality-Preserving Reductions}
\label{Sec:frequency-reductions}

In this section, we introduce maximality-preserving reductions and state some of
their properties in Section~\ref{sec:definition-and-properties}.
In Section~\ref{Sec:reductions}, we prove reductions between the problems $\MaxFIS$,
$\MaxSQSn$, and $\MaxFS{\G}$ for $\G \in \{ \T, \BDG{3}, \Graphs \}$.
Combining our reductions with the statements from
Section~\ref{Sec:complexity}, we arrive at the following theorem.

\begin{theorem}
Our reductions imply the following hardness results:
\begin{enumerate}
\item For any fixed $k$, $\extendiblek{\MaxFIS}{k}$ can
be solved in polynomial time.

\item For any fixed $\tau$, $\enum{\MaxFIS^{\tau}}$ can
be solved in polynomial time.

\item The problems $\MaxFS{\Graphs}$ and $\MaxFS{\BDG{3}}$ exhibit exactly the same
hardness w.r.t.\ the notions of Sections~\ref{sec:enumeration-problems}
and~\ref{sec:counting-complexity}.
More concretely, let $\P$ be $\MaxFS{\Graphs}$ or $\MaxFS{\BDG{3}}$.
Then the following statements are true:
\begin{itemize}
	\item $\enum{\P}$ is \sharpP-hard.
	\item $\extendible{\P}$ is \NP-hard.
	\item For $k > 2$, the problem $\extendiblek{\P}{k}$ is \NP-hard.
	\item For fixed $\tau$, the problem $\enum{\P^{\tau}}$ is solvable in
		polynomial time.
\end{itemize}
\end{enumerate}
\end{theorem}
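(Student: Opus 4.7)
The plan is to derive all four statements in the theorem as corollaries of the two new maximality-preserving reductions to be established in Section~\ref{Sec:reductions} (the solid red edges in Figure~\ref{fig:hierarchy}), combined with the existing hardness and tractability results compiled in Section~\ref{Sec:complexity}. The two new reductions I will rely on are one from $\MaxFIS$ to $\MaxFS{\T}$ (giving items~(1) and~(2)) and one from $\MaxFS{\Graphs}$ to $\MaxFS{\BDG{3}}$ (giving item~(3), in combination with the trivial reduction in the opposite direction that comes for free from the inclusion $\BDG{3}\subseteq\Graphs$).

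For items~(1) and~(2), the key observation is that a maximality-preserving reduction runs in polynomial time and induces a correspondence between the maximal frequent patterns of the source and target problems, so an efficient algorithm on the target lifts to an efficient algorithm on the source, both for $\extendiblek{\cdot}{k}$ and for $\enum{\cdot^{\tau}}$. Combining the new reduction $\MaxFIS \to \MaxFS{\T}$ with the fact that $\extendiblek{\MaxFS{\T}}{k}$ is in polynomial time for every fixed $k$ (Theorem~\ref{Thm:kimelfeld2014complexity}) then yields item~(1). For item~(2), the same reduction together with the polynomial-time algorithm for $\enum{\MaxFStau{\T}}$ at any fixed $\tau$ (again Theorem~\ref{Thm:kimelfeld2014complexity}) gives $\enum{\MaxFIS^{\tau}}$ in polynomial time, provided one checks that the reduction either preserves $\tau$ or maps a fixed $\tau$ to a threshold that depends only on $\tau$.

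For item~(3), the trivial inclusion reduction $\MaxFS{\BDG{3}} \to \MaxFS{\Graphs}$ together with the new reduction $\MaxFS{\Graphs} \to \MaxFS{\BDG{3}}$ establishes the equivalence of the two problems under maximality-preserving reductions. I would then pull the \sharpP-hardness of $\enum{\MaxFS{\Graphs}}$ given by Theorem~\ref{Thm:yang2004complexity}, and the \NP-hardness of $\extendible{\MaxFS{\Graphs}}$ and of $\extendiblek{\MaxFS{\Graphs}}{k}$ for $k > 2$ given by Theorem~\ref{Thm:kimelfeld2014complexity}, across the reduction $\MaxFS{\Graphs} \to \MaxFS{\BDG{3}}$ to obtain the matching hardness statements for $\MaxFS{\BDG{3}}$. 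The final bullet, fixed-$\tau$ tractability of $\enum{\cdot^{\tau}}$, is immediate from Theorem~\ref{Thm:kimelfeld2014complexity}, which already states the result for every graph class in question, including both $\Graphs$ and $\BDG{3}$.

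The main obstacle is not this composition argument but the construction of the two new reductions in Section~\ref{Sec:reductions}: one must exhibit explicit polynomial-time encodings that induce a bijection on maximal frequent patterns, handle the support threshold coherently, and translate witness sets $Y$ in both directions so that the $\extendiblek{\cdot}{k}$ transfer is valid. Once those reductions have been built and verified to have the required properties, the present theorem follows by the routine bookkeeping sketched above.
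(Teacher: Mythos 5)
Your proposal is correct and follows essentially the same route as the paper: the paper proves the theorem by combining exactly the two reductions you name (Lemma~\ref{Thm:MaxFIS}, $\MaxFIS$ to $\MaxFS{\T}$, for items~(1) and~(2); Lemma~\ref{Thm:ReductionBDG3}, $\MaxFS{\Graphs}$ to $\MaxFS{\BDG{3}}$, for item~(3)) with Theorems~\ref{Thm:yang2004complexity} and~\ref{Thm:kimelfeld2014complexity}. The two side conditions you flag are also handled as you expect: Definition~\ref{Def:MaxPreservingReduction} keeps the threshold $\tau$ unchanged, and the injectivity of $f$ together with Property~\ref{item:maxPreserving1} lets witness sets $Y$ be translated in both directions, so the $\extendiblek{}{k}$ transfer goes through.
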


The proof of the theorem follows from our reductions later in this section and
the theorems from Section~\ref{Sec:complexity}.

\subsection{Definition and Properties}
\label{sec:definition-and-properties}
We formally define maximality-preserving reductions to make explicit which properties are
required by reductions in order to be useful for understanding the complexity
of frequency-based problems w.r.t.\ to the notions of
Sections~\ref{sec:enumeration-problems}
and~\ref{sec:counting-complexity}.

\begin{definition}
\label{Def:MaxPreservingReduction}
Let $\P$ and $\Q$ be two frequency-based problems, let
$D_{\P}$ be a database for $\P$, and let $\tau$ be a support threshold.
A \emph{maximality-preserving reduction} from $\P$ to $\Q$
defines an instance $(D_{\Q}, \tau)$ using a polynomial-time
computable injective function $f\colon \transactions{\P} \to \transactions{\Q}$
with the following properties:
\begin{enumerate}
	\item \label{item:patternSubsets1}
		$f(\patterns{\P}) \subseteq \patterns{\Q}$.
	\item \label{item:subsetPreserving1}
		For all $p, p' \in \transactions{\P}$,
		$p \sqsubseteq_{\P} p'$ if and only if $f(p) \sqsubseteq_{\Q} f(p')$.
	\item \label{item:efficientInverse1}
		The inverse $f^{-1}\colon \transactions{\Q} \to \transactions{\P}$ of $f$ can be
			computed in polynomial time.
	\item \label{item:maxPreserving1}
		$p \in \Max{D_{\P}, \tau}$ if and only if
	   	$f(p) \in \Max{D_{\Q}, \tau}$, where $D_{\Q} = f(D_{\P}) = \{ f(t) : t \in D_{\P} \}$.
		Additionally, for all $q \in \Max{D_{\Q}, \tau}$ the preimage
		$f^{-1}(q)$ exists.
\end{enumerate}
\end{definition}

Intuitively, the properties can be interpreted as follows:
Property~\ref{item:patternSubsets1} asserts that $f$ maps valid patterns from $\patterns{\P}$
to valid patterns in $\patterns{\Q}$;
this condition is necessary if $\patterns{\Q} \subsetneq \transactions{\Q}$.
Property~\ref{item:subsetPreserving1} asserts that $f$ maintains subset properties.
Property~\ref{item:efficientInverse1} will be necessary to recover patterns in $\P$
from those found in $\Q$.
Property~\ref{item:maxPreserving1} requires that the maximal frequent
patterns in $D_{\P}$ are the same as those in $D_{\Q}$ under the mapping $f$;
here, the database $D_{\Q}$ is given by applying the function $f$ to each
transaction in $D_{\P}$.

\paragraph*{Properties}
Observe that Property~\ref{item:maxPreserving1}
implies that there exists a bijective relationship between the maximal frequent patterns
in $D_{\P}$ and in $D_{\Q}$.
Hence, we have $|\Max{D_{\P}, \tau}| = |\Max{D_{\Q}, \tau}|$.
This shows that maximality-preserving reductions are special cases of parsimonious reductions
and that they preserve $\sharpP$-hardness.

In fact, maximality-preserving reductions are slightly stronger than parsimonious reductions.
They do not only preserve the number of maximal frequent patterns in both databases,
but they enable us to recover the maximal frequent patterns in $D_{\P}$ from those in $D_{\Q}$:
By injectivity of $f$ and due to Property~\ref{item:maxPreserving1}, we can reconstruct
$\Max{D_{\P},\tau}$ in polynomial time from $\Max{D_{\Q},\tau}$.
Hence, maximality-preserving reductions can be used to argue about the complexity of
extendibility problems as discussed in Section~\ref{sec:enumeration-problems}.

Further, note that by choice of $D_{\Q}$ in Property~\ref{item:maxPreserving1},
$D_{\Q}$ has the same number of transactions as $D_{\P}$, and
that no dependency within different transactions is created by the mapping $f$.
Additionally, by Property~\ref{item:subsetPreserving1}, the support of a
pattern $p$ in $D_{\P}$ is a lower bound on the support of $f(p)$ in
$D_{\Q}$ (since for each transaction $t \in D_{\P}$ with $p \sqsubseteq t$,
$f(p) \sqsubseteq f(t)$).

However, although the number of transactions and \emph{maximal} frequent patterns in both databases remains the same,
the number of \emph{frequent} patterns in $D_{\Q}$ might be exponentially larger
than the number of frequent patterns in $D_{\P}$. For example, this is the case in the reduction
in Lemma~\ref{Thm:ReductionBDG3}.

\subsection{Reductions}
\label{Sec:reductions}
In this section, we present three maximality-preserving reductions.
Reductions similar to ones in Lemmas~\ref{Thm:MaxFIS} and~\ref{Thm:ReductionMaxSQS}
were already presented by Yang~\cite{yang2004complexity}, Kimelfeld and
Kolaitis~\cite{kimelfeld2014complexity} and other authors.
We only prove Property~\ref{item:maxPreserving1} of maximality-preserving reductions.
The proofs of Properties~\ref{item:patternSubsets1}--\ref{item:efficientInverse1} are
straight-forward and follow from the definitions of the mapping $f$.

\paragraph*{Reduction from $\MaxFIS$ to $\MaxFS{\T}$}
We show how to mine maximal itemsets by mining maximal subtrees.
\begin{lemma}
\label{Thm:MaxFIS}
There exists a maximality-preserving reduction from $\MaxFIS$ to $\MaxFS{\T}$.
\end{lemma}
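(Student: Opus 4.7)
The plan is to encode each itemset as a \emph{star tree} whose center carries a fresh root label. Concretely, introduce a new label $r \notin \L$ and define $f\colon 2^{\L} \to \T$ by sending $S \subseteq \L$ to the tree $f(S)$ with vertex set $S \cup \{r\}$, labels inherited from $\L \cup \{r\}$, and an edge $\{r,i\}$ for every $i \in S$. The target database is $D_{\MaxFS{\T}} = \{f(t) : t \in D_{\MaxFIS}\}$ with the same threshold $\tau$. The map is clearly injective and polynomial-time computable, and the inverse simply reads off the set of non-root leaf labels of any input star, so Properties \ref{item:patternSubsets1} and \ref{item:efficientInverse1} are immediate.

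Next I would establish Property~\ref{item:subsetPreserving1}, the subset--subgraph correspondence. The forward direction is obvious: if $S \subseteq S'$ then $f(S)$ is obtained from $f(S')$ by deleting some leaves and their incident edges. For the converse, since each label occurs at most once in every graph under consideration, any subgraph embedding $f(S) \hookrightarrow f(S')$ must map the vertex labelled $r$ to the vertex labelled $r$, and each leaf labelled $i$ to the unique leaf labelled $i$ in $f(S')$; hence every $i \in S$ lies in $S'$. Together these give the iff.

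The main work is Property~\ref{item:maxPreserving1}. The critical observation is a structural characterization of frequent patterns in $D_{\MaxFS{\T}}$: any tree $T \in \patterns{\MaxFS{\T}}$ that is a subgraph of some $f(t)$ (a star) is itself either a single labelled vertex or a star centered at the vertex labelled $r$. In particular, every frequent tree has one of these two forms. Moreover, a single-vertex tree labelled $\ell \in \L$ is never maximal: if it is $\tau$-frequent, then $\ell$ occurs in at least $\tau$ transactions of $D_{\MaxFIS}$, so the two-vertex star $f(\{\ell\})$ strictly contains it and is also $\tau$-frequent. Consequently every maximal frequent pattern in $D_{\MaxFS{\T}}$ is either $\{r\} = f(\emptyset)$ or a star $f(S)$ with $S \neq \emptyset$, so preimages always exist. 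Combined with the identity $\supp(S, D_{\MaxFIS}) = \supp(f(S), D_{\MaxFS{\T}})$ (which follows from Property~\ref{item:subsetPreserving1}) and the fact that an extension of $f(S)$ in $\patterns{\MaxFS{\T}}$ that remains frequent must be a star $f(S')$ with $S' \supsetneq S$, the two maximality notions match exactly.

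The main obstacle I expect is the structural lemma that frequent trees must be stars centered at $r$, and the corresponding handling of the degenerate single-vertex subtrees; these are the candidates that could break Property~\ref{item:maxPreserving1} if not ruled out. The special label $r$ is precisely what forces any larger frequent tree back into the image of $f$, so verifying carefully that $r$ truly appears in every transaction (by construction) and that no frequent tree can grow ``beyond'' a star is the key step.
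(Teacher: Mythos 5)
Your construction is identical to the paper's: map each itemset to a depth-one star with a fresh root label $r$ and one leaf per item, and transfer supports via the subset--subgraph correspondence. The proposal is correct and, if anything, more careful than the paper's own argument, since you explicitly rule out maximal frequent patterns outside the image of $f$ (non-star subtrees and single-vertex trees), a point the paper's proof passes over implicitly.
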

\begin{proof}
Consider $\MaxFIS$ with labels $\L = \{1, \dots, n\}$.
We construct trees over labels from the alphabet $\L' = \{r, 1, \dots, n\}$,
where $r$ is the label of the root nodes in the trees. For simplicity, we do not
distinguish between vertices and their labels.

\emph{Construction of $f$.}
An itemset $\{i_1, \dots, i_k\} \in \transactions{\MaxFIS}$ is mapped to a tree
of depth 1 with root $r$ and children $i_1, \dots, i_k$, i.e., the tree has an
edge $(r, i_j)$ for all $j = 1, \dots, k$.

\emph{Maximality-preserving.}
Observe that there exists a bijection between itemsets $I \subseteq \L$ and trees $f(I)$.
Further note that for two itemsets $I$ and $J$, $I \subseteq J$ if and only if $f(I) \subseteq f(J)$.
It follows that an itemset $I$ and a tree $f(I)$ must have the same supports in
$D_{\MaxFIS}$ and in $D_{\MaxFS{\T}}$, respectively. The maximality then follows from the
subset-property we observed.
\end{proof}

\paragraph*{From $\MaxFIS$ to $\MaxSQSn$}
We show how to mine maximal itemsets by mining maximal subsequences.
\begin{lemma}
\label{Thm:ReductionMaxSQS}
There exists a maximality-preserving reduction from $\MaxFIS$
to $\MaxSQSn$.
\end{lemma}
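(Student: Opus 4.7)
The plan is to fix an arbitrary total order on the label set $\L = \{1,\dots,n\}$ (say, the natural order) and define $f\colon \transactions{\MaxFIS} \to \transactions{\MaxSQSn}$ by sending an itemset $\{i_1,\dots,i_k\}$ with $i_1 < i_2 < \cdots < i_k$ to the sequence $\langle i_1, i_2, \dots, i_k\rangle$. Since every label in the input itemset occurs at most once, $f(I)$ is a valid sequence in $\MaxSQSn$, so Property~\ref{item:patternSubsets1} holds. Injectivity of $f$ and the polynomial-time computability of $f$ and $f^{-1}$ (just read off the labels) are immediate, giving Property~\ref{item:efficientInverse1}.

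For Property~\ref{item:subsetPreserving1}, if $I \subseteq J$ then the sorted listing of $I$ is obtained by deleting some elements from the sorted listing of $J$, so $f(I) \sqsubseteq_{\MaxSQSn} f(J)$; conversely, if $f(I) \sqsubseteq f(J)$ then the labels of $f(I)$ form a subset of the labels of $f(J)$, i.e.\ $I \subseteq J$. Consequently, for any transaction $t \in D_{\MaxFIS}$, $I \subseteq t$ if and only if $f(I) \sqsubseteq f(t)$, so $\supp(I, D_{\MaxFIS}) = \supp(f(I), D_{\MaxSQSn})$ where $D_{\MaxSQSn} = f(D_{\MaxFIS})$.

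The one step that needs an argument is Property~\ref{item:maxPreserving1}. If $I$ is not maximal, witnessed by a frequent $I \subsetneq J$, then $f(I) \sqsubsetneq f(J)$ and $f(J)$ has the same support as $J$, so $f(I)$ is not maximal either. The converse direction is the only nontrivial point: suppose $f(I)$ is not maximal in $D_{\MaxSQSn}$, so there is a frequent sequence $q \sqsupsetneq f(I)$. The key observation is that every transaction in $D_{\MaxSQSn}$ is of the form $f(t)$, which is a sorted sequence, and any subsequence of a sorted sequence is itself sorted. Hence $q$ must be sorted, so $q = f(J)$ for the itemset $J$ consisting of the labels appearing in $q$; then $J \sqsupsetneq I$ and $\supp(J, D_{\MaxFIS}) = \supp(q, D_{\MaxSQSn}) \geq \tau$, contradicting the maximality of $I$. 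The same observation also guarantees that the preimage under $f$ of every maximal frequent pattern in $D_{\MaxSQSn}$ exists.

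The main (and only mild) obstacle is this last step — ruling out that some frequent ``out-of-order'' extension $q$ of $f(I)$ might witness non-maximality of $f(I)$ without corresponding to an itemset extension of $I$. Once one notices that the database transactions are themselves sorted sequences and that subsequence preserves the order, this obstacle evaporates, completing the verification of all four properties.
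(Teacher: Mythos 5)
Your proof is correct and follows essentially the same route as the paper: map each itemset to its sorted sequence under a fixed order and observe that $\subseteq$ corresponds exactly to $\sqsubseteq$ on the images. You are in fact slightly more careful than the paper on the one delicate point --- ruling out ``out-of-order'' frequent extensions of $f(I)$ by noting that every transaction (hence every frequent subsequence, for $\tau\geq 1$) is sorted --- which the paper leaves implicit in ``the maximality then follows from the subset-property.''
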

\begin{proof}
\emph{Construction of $f$.}
Consider $\MaxFIS$ with labels $\L = \{1, \dots, n\}$ and assume the labels are ordered
w.r.t.\ to some arbitrary, but fixed, order $\prec$.
Let $I = \{i_1, \dots, i_m\} \subseteq \L$ be any itemset with $m$ items.
Assume w.l.o.g.\ that the items in $I$ are ordered w.r.t.\ the fixed order, i.e., $i_j \prec i_{j+1}$.
Then $I$ is mapped to the sequence $\langle i_1, \dots, i_m \rangle$ of length $m$.

\emph{Maximality-preserving.}
Observe that there exists a bijection between itemsets $I \subseteq \L$ and sequences $f(I)$
(under the fixed order).
Further observe that for two itemsets $I$ and $J$, $I \subseteq J$ if and only if $f(I) \sqsubseteq f(J)$.
It follows that an itemset $I$ and a sequence $f(I)$ must have the same supports in
$D_{\MaxFIS}$ and in $D_{\MaxSQSn}$, respectively. The maximality then follows from the
subset-property we observed.
\end{proof}

\paragraph*{From $\MaxFS{\Graphs}$ to $\MaxFS{\BDG{3}}$}
We show that mining maximal frequent subgraphs in graphs with degrees bounded by $3$ can
be used to mine maximal frequent subgraphs in general undirected graphs.
Note that this is the tightest result we could hope for, since graphs with degree
bounded by $2$ are simply cycles or line graphs.

\begin{lemma}
\label{Thm:ReductionBDG3}
There exists a maximality-preserving reduction from $\MaxFS{\Graphs}$ to $\MaxFS{\BDG{3}}$.
\end{lemma}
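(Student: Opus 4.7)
My plan is to replace every vertex $v$ by a canonical bounded-degree gadget whose structure depends only on $\ell_v$ and not on the neighbourhood of $v$ in $G$, and to encode each edge of $G$ as a single cross-edge between designated ``port'' vertices of the gadgets for its two endpoints. Concretely, I would fix in advance, for every label $\ell \in \L$, a path $P_\ell$ carrying one port vertex labelled $\langle \ell, \ell' \rangle$ for each $\ell' \in \L \setminus \{\ell\}$ in some canonical order, and then define $f(G)$ as $\bigcup_{v \in V(G)} P_{\ell_v}$ together with, for every $(v,u) \in E(G)$, the single cross-edge joining $\langle \ell_v, \ell_u \rangle$ and $\langle \ell_u, \ell_v \rangle$. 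Each port carries at most two path edges and at most one cross-edge, so the degree is bounded by $3$; contracting each $P_{\ell_v}$ to a point recovers $G$, so $f(G)$ is connected whenever $G$ is, and therefore $f(G) \in \transactions{\MaxFS{\BDG{3}}}$.

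Properties~\ref{item:patternSubsets1}--\ref{item:efficientInverse1} of Definition~\ref{Def:MaxPreservingReduction} are direct from the construction, and because every gadget $P_\ell$ is used verbatim inside every $f(G)$ containing a vertex labelled $\ell$, Property~\ref{item:subsetPreserving1} reduces to checking that $V(p) \subseteq V(p')$ and $E(p) \subseteq E(p')$ can be read off from the inclusion of ports and cross-edges of $f(p)$ into those of $f(p')$. Support preservation then follows, and the easy direction of Property~\ref{item:maxPreserving1} is immediate: a frequent $p' \sqsupsetneq p$ lifts to a frequent $f(p') \sqsupsetneq f(p)$, and vice versa for preimages of non-maximal $f(p)$.

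The main obstacle is to show that $f(p)$ is maximal in $\patterns{\MaxFS{\BDG{3}}}$ whenever $p$ is maximal, since an arbitrary $q \sqsupsetneq f(p)$ in $\patterns{\MaxFS{\BDG{3}}}$ need not be of the form $f(p^\star)$---it may, for instance, attach only a single port of some $P_{\ell_w}$ with $w \notin V(p)$ via a cross-edge, omitting the remaining vertices of $P_{\ell_w}$. I would deal with this by a ``gadget completion'' argument. Let $W$ be the set of labels $w \notin V(p)$ for which $q$ contains at least one vertex of $P_{\ell_w}$, and set $q^\star := q \cup \bigcup_{w \in W} P_{\ell_w}$. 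For any transaction $t$, $P_{\ell_w}$ is present in $f(t)$ as soon as $w \in V(t)$, and $q \sqsubseteq f(t)$ already forces $w \in V(t)$ whenever $q$ touches $P_{\ell_w}$, so $\supp(q, f(D_\P)) = \supp(q^\star, f(D_\P))$. By construction $q^\star = f(p^\star)$, where $p^\star$ is obtained from $p$ by adjoining the new vertices in $W$ and the original edges read off from the new cross-edges of $q$; hence $\supp(q, f(D_\P)) = \supp(p^\star, D_\P)$, which lies below $\tau$ by the maximality of $p$. The same completion argument also delivers the additional preimage condition in Property~\ref{item:maxPreserving1}, since any maximal frequent $q$ must coincide with its own completion and therefore lies in the image of~$f$.
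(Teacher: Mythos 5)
Your mapping $f$ is the same as the paper's: each vertex $v$ is expanded into a path of port vertices labelled by pairs $(\labelop{v},\cdot)$, and each edge of $G$ becomes a single cross-edge between the two matching ports (the paper additionally keeps a redundant self-port $v_{\labelop{v}}$, which changes nothing), so the construction and Properties~\ref{item:patternSubsets1}--\ref{item:efficientInverse1} coincide. The difference lies in how Property~\ref{item:maxPreserving1} is verified. The paper assumes a \emph{maximal} frequent $q$ with $f(p)\sqsubset q$, locates a new cross-edge $(u_i,v_j)$, and runs a three-way case analysis on whether its endpoints lie in $f(p)$, invoking the maximality of $q$ in the mixed case to conclude that $q$ already contains the entire line graph of the new vertex. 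Your gadget-completion argument replaces this with a single support-preserving closure: any frequent $q\sqsupsetneq f(p)$ has the same support as its completion $q^\star$, which lies in the image of $f$ and strictly dominates $f(p)$, contradicting the maximality of $p$ directly. Both arguments rest on the same observation --- a transaction image $f(t)$ containing any port of a gadget contains the whole gadget --- but your version buys a uniform treatment of all cases, does not require the competing pattern to be maximal, and delivers the preimage clause of Property~\ref{item:maxPreserving1} for free (a maximal frequent pattern must equal its own completion), a point the paper only asserts follows from its case distinction. I would therefore classify your proof as correct, with the same reduction but a tidier and slightly more robust maximality argument.
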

\begin{proof}
\emph{Construction of $f$.}
Let $G = (V,E)$ be a graph with unbounded degree of the vertices
over labels $\L = \{1, \dots, n\}$.
Denote the label of a vertex $v \in V$ by $\labelop{v}$.
We construct a graph $G' = (V', E')$ with bounded degree~$3$
over the set of labels $\L' = \{1, \dots, n \}^2$.

Intuitively, the construction of $f$ is picked such that every original vertex
$v \in V$ is split into a line graph consisting of $n$ vertices $v_i$, where each $v_i$
has an additional non-line-graph-edge in $G'$ iff vertices $v$ and $i$ share an edge in $G$.

Formally, for each vertex $v \in V$, we insert vertices $v_1, \dots, v_n$ into $V'$
with edges $(v_i, v_{i+1})$ for $i = 1, \dots, n-1$.
Each vertex $v_i$ is labeled by $(\labelop{v}, i)$.
For each edge $(u,v) \in E$, we insert an edge $(u_{\labelop{v}}, v_{\labelop{u}})$ into
$G'$.

Observe that the resulting graph $G' = f(G)$ indeed has bounded degree $3$:
Consider any vertex $v_i \in V'$.
The vertex has at most $2$ neighbors from the line graph $(v_1, \dots, v_n)$.
The only additional edge it could have is to vertex $i_{\labelop{v}}$.

\emph{Maximality-preserving.}
Let $p \in \Max{D_{\MaxFS{\Graphs}}, \tau}$.
We need to show that $f(p) \in \Max{D_{\MaxFS{\BDG{3}}}, \tau}$.
By construction of $f$, we have that
$\supp(f(p), D_{\MaxFS{\BDG{3}}}) = \supp(p, D_{\MaxFS{\Graphs}})$;
hence, $f(p)$ is frequent in $D_{\MaxFS{\BDG{3}}}$.
We need to show that $f(p)$ is also maximal. For the sake of contradiction,
suppose there exists a maximal frequent pattern $q$ with $f(p) \sqsubset q$ in $D_{\MaxFS{\BDG{3}}}$.
Then $q$ must contain an edge $(u_i, v_j)$ with $i = \labelop{v}$, $j = \labelop{u}$,
which is not contained in $f(p)$.

\emph{Case 1:}
$u_i \in f(p)$ and $v_j \in f(p)$. Consider the graph $q' =  f(p) \cup (u_i, v_j)$.
Then $f^{-1}( q' )$ exists and must be
frequent in $D_{\MaxFS{\Graphs}}$ by Property~\ref{item:subsetPreserving1}.
This contradicts the maximality of $p$.

\emph{Case 2:}
W.l.o.g.\ assume that $u_i \in f(p)$ and $v_j \not\in f(p)$.
Then since $q$ is maximal and by construction of $f$ and $D_{\MaxFS{\BDG{3}}}$,
$q$ must contain the line graph $L$ with vertices $v_1, \dots, v_n$.
Consider the graph $q' = f(p) \cup (u_i,v_j) \cup L$.
Again by construction of $f$ and $D_{\MaxFS{\BDG{3}}}$,
$q'$ has a preimage $p' = f^{-1}(q')$ which is frequent and satisfies $p \sqsubset p'$.
This is a contradiction to the maximality of $p$.

\emph{Case 3:}
$u_i \not\in f(p)$ and $v_j \not\in f(p)$.
Since $q$ is connected and $f(p) \sqsubset q$, we only need to consider the first
two cases.

Observe that the second part of Property~\ref{item:maxPreserving1} is implied by
the previous three case distinctions.
Proving that $f(p) \in \Max{D_{\MaxFS{\BDG{3}}}, \tau}$ implies
$p \in \Max{D_{\MaxFS{\Graphs}}, \tau}$ can be done similarly to above.
\end{proof}

\section{Constraining the Set of Patterns}
\label{Sec:constraining}

In this section, we generalize frequency-based problems by allowing to constrain
the set of patterns using a feasibility function.
We introduce maximality-preserving reductions for this class of problems and
prove that all problems discussed in this paper exhibit exactly the same
hardness after introducing the feasibility function.

\subsection{Feasible Frequency-Based Problems}
A \emph{feasible frequency-based problem} (\ffpp) $\P$ is
a frequency-based problem with an additional polynomial-time computable operation
$\phi\colon \patterns{\P} \to \{0,1\}$ which can be described using
constant space.
Note that the operation $\phi$ is part of the input for the problem;
this is the reason for restricting the description length of the function
to constant size (otherwise, the description length of the function might be
larger than the database for the problem).
We call $\phi$ the \emph{feasibility function}.

Given a feasible frequency-based problem $\P$,
a pattern $p \in \patterns{\P}$ is a \emph{feasible frequent pattern} (\ffp) if
$p$ is frequent and $\phi(p) = 1$.
The goal is to find all maximal \ffp s; we denote the set of all \ffp s by
$\Max{D_\P, \tau, \phi_{\P}}$.
We define $\MaxFFIS$, $\MaxFSQSn$, and $\MaxFFS{\G}$
for a graph class $\G$ as before for maximal frequency-based problems.

Note that \ffpp s are generalizations of frequency-based problems since setting
$\phi_{\P}$ to the function which is identical to $1$, we obtain the underlying frequency-based problem.

The main result of this section is given in the following theorem.
\begin{theorem}
\label{Thm:ffpp}
The \ffpp-version of all problems discussed in this paper exhibit exactly the same
hardness w.r.t.\ the notions of Sections~\ref{sec:enumeration-problems}
and~\ref{sec:counting-complexity}.
More concretely, let $\P$ be any \ffpp-problem discussed in this paper. Then the
following statements are true:
\begin{itemize}
	\item $\enum{\P}$ is \sharpP-hard.
	\item $\extendible{\P}$ is \NP-hard.
	\item For $k > 2$, the problem $\extendiblek{\P}{k}$ is \NP-hard.
	\item For fixed $\tau$, the problem $\enum{\P^{\tau}}$ is solvable in
		polynomial time.
\end{itemize}
\end{theorem}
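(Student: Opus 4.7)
The overall strategy is to split the theorem into two independent ingredients: the hardness lower bounds are inherited from the unconstrained frequency-based problems already known to be hard (collected in Section~\ref{Sec:complexity}), and these bounds are then propagated across the whole list via a cycle of maximality-preserving reductions for \ffpp s. Tractability for fixed $\tau$ is then handled separately as an output-sensitive statement.

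Every frequency-based problem embeds into its \ffpp-version by taking the feasibility function $\phi \equiv 1$, and this trivial embedding is itself a maximality-preserving reduction. Consequently, Yang's \sharpP-hardness of $\MaxFIS$, $\MaxFS{\T}$, $\MaxFS{\Graphs}$, and $\MaxSQSn$ (Theorem~\ref{Thm:yang2004complexity}), Boros et al.'s \NP-hardness of $\extendible{\MaxFIS}$, and the \NP-hardness of $\extendiblek{\cdot}{k}$ for $k>2$ on the graph classes considered by Kimelfeld and Kolaitis (Theorem~\ref{Thm:kimelfeld2014complexity}) all transfer verbatim to $\MaxFFIS$, $\MaxFSQSn$, and $\MaxFFS{\G}$. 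This settles the first three bullets for those particular problems.

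To propagate the hardness to every \ffpp{} on the list, I would first lift each maximality-preserving reduction $f\colon\transactions{\P}\to\transactions{\Q}$ of Section~\ref{Sec:reductions} to the \ffpp-setting by defining the induced feasibility function $\phi'_{\Q}(q) = \phi_{\P}(f^{-1}(q))$ whenever the preimage exists and $0$ otherwise; since $f$ is injective and polynomial-time invertible, $\phi'_{\Q}$ inherits polynomial-time computability and constant-size describability from $\phi_{\P}$. Properties~\ref{item:patternSubsets1}--\ref{item:maxPreserving1} of the underlying reduction then lift without change, preserving the bijection between maximal \ffp s. Next, I would construct the three reductions corresponding to the dashed arrows of Figure~\ref{fig:hierarchy}, namely $\MaxFFIS \to \MaxFFS{\Graphs}$, $\MaxFFIS \to \MaxFFS{\DirG}$, and $\MaxFFS{\DAG} \to \MaxFSQSn$. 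The common idea for these new reductions is to embed transactions of the source into a richer target pattern space, and then use the target feasibility function to \emph{forbid} every pattern that is not in the image of this embedding, so that maximal \ffp s on the target side are in bijection with the maximal frequent patterns on the source side. Together with transitivity of \ffpp-reductions (Lemma~\ref{lemma:ffpp-reductions-transitive}), this closes the hierarchy into a set of mutually reducible problems and propagates every hardness bullet of step one to the full list.

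For the fixed-$\tau$ tractability I would combine the polynomial-time algorithm of Theorem~\ref{Thm:kimelfeld2014complexity}, which produces the polynomially many maximal frequent patterns of the unconstrained problem, with an output-sensitive downward search using $\phi$: every maximal \ffp{} lies below some maximal frequent pattern $P$, and within such a $P$ it must already be a maximal feasible sub-pattern. The polynomial-time evaluability of $\phi$ makes each feasibility test cheap, while the constant-size description of $\phi$ is what prevents a feasibility function from inflating the per-output cost, so the total running time stays polynomial in the input plus the output. I expect this to be the technically delicate point: a single maximal frequent pattern can harbour super-polynomially many maximal feasible sub-patterns, so ``polynomial time'' has to be interpreted in the total/incremental sense of Section~\ref{sec:enumeration-problems}, and the downward search must be designed so that both the feasibility checks and the redundancy across the different $P$'s remain controlled by the output size.
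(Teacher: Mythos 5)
Your skeleton matches the paper's proof: hardness is imported from the unconstrained problems via the trivial embedding $\phi\equiv 1$, each unconstrained maximality-preserving reduction is lifted to the \ffpp{} setting by setting $\phi_{\Q}(q)=\phi_{\P}(f^{-1}(q))$ when the preimage exists (this is precisely the paper's lifting lemma), transitivity (Lemma~\ref{lemma:ffpp-reductions-transitive}) chains everything together, and the cycle is closed by three new constrained reductions whose feasibility functions forbid every pattern outside the image of the embedding. That recipe is exactly what the paper implements.

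The genuine gap is that you state all three new reductions in the reverse direction, and with your directions the hierarchy does not collapse. The dashed arrows correspond to reductions \emph{from} $\MaxFFS{\Graphs}$ \emph{to} $\MaxFFIS$ (Lemma~\ref{Thm:MaxFSGraphsFFIS}: a graph becomes the itemset of its edge-label pairs, and feasibility enforces connectedness), from $\MaxFFS{\DirG}$ to $\MaxFFIS$, and from $\MaxFSQSn$ to $\MaxFFS{\DAG}$ (a sequence becomes its transitive tournament). Your versions -- $\MaxFFIS$ to $\MaxFFS{\Graphs}$, $\MaxFFIS$ to $\MaxFFS{\DirG}$, and $\MaxFFS{\DAG}$ to $\MaxFSQSn$ -- are largely redundant (itemsets already reduce to the graph problems through the lifted chain $\MaxFIS \to \MaxFS{\T} \to \cdots$) and, more importantly, insufficient: in your reduction digraph nothing reduces \emph{to} $\MaxFFS{\DAG}$, so none of the four bullets is established for it, and nothing reduces the hard graph problems back to $\MaxFFIS$, so the \NP-hardness of $\extendiblek{\MaxFFIS}{k}$ for $k>2$ (and likewise for the tree and sequence problems) is not established. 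This last point is not cosmetic: for the \emph{unconstrained} $\MaxFIS$ that problem is solvable in polynomial time for every fixed $k$ (via the reduction to $\MaxFS{\T}$ and Theorem~\ref{Thm:kimelfeld2014complexity}), so the graph-to-itemset direction is exactly what makes it \NP-hard in the \ffpp{} setting. Separately, your treatment of the fixed-$\tau$ bullet is a different route from the paper, which propagates tractability backwards through the same reductions rather than running a downward search from the unconstrained maximal patterns; the obstruction you yourself identify -- that one maximal frequent pattern may contain super-polynomially many maximal feasible sub-patterns -- is real and is left unresolved in your sketch.
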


Theorem~\ref{Thm:ffpp} shows that the hierarchy given in
Figure~\ref{fig:hierarchy} for frequency-based problems completely collapses
when a feasibility function is introduced to the problem.
Note that many practical algorithms (like the Apriori algorithm) for finding maximal frequent
patterns allow to add such a feasibility function.
Hence, our reductions give a theoretical justification why many of these algorithms
can be extended to a broader range of problems.

The proof of the theorem follows from the reductions presented later in this
section and the theorems from Section~\ref{Sec:complexity}.

\subsection{Maximality-Preserving Reductions for FFPPs}
We start by defining maximality-preserving reductions between two \ffpp s $\P$ and $\Q$.
\begin{definition}
\label{Def:FFPPReduction}
	Let $\P$ and $\Q$ be two \ffpp s.
	Let $D_{\P}$ be a database for $\P$, let $\phi_{\P}$ be
	the feasibility function for $\P$, and let $\tau$ be a support threshold.

	A \emph{maximality-preserving reduction} from $\P$ to $\Q$
	defines an instance $(D_{\Q}, \tau, \phi_{\Q})$ using a polynomial-time
	computable injective function $f\colon \transactions{\P} \to \transactions{\Q}$
	with the following properties:
	\begin{enumerate}
		\item \label{item:patternSubsets2} $f(\patterns{\P}) \subseteq \patterns{\Q}$.
		\item \label{item:subsetPreserving2}
				For all $p, p' \in \transactions{\P}$, $p \sqsubseteq_{\P} p'$ if and only if
				$f(p) \sqsubseteq_{\Q} f(p')$.
		\item \label{item:efficientInverse2}
				The inverse $f^{-1}\colon \transactions{\Q} \to \transactions{\P}$ of $f$ can be
				computed in polynomial time.
		\item \label{item:maxPreserving2}
			$p \in \Max{D_{\P}, \tau, \phi_{\P}}$ if and only if
			$f(p) \in \Max{D_{\Q}, \tau, \phi_{\Q}}$, where $D_{\Q} = f(D_{\P}) = \{ f(t) : t \in D_{\P} \}$.
			Additionally, for all $q \in \Max{D_{\Q}, \tau, \phi_{\Q}}$ the preimage
			$f^{-1}(q)$ exists.
	\end{enumerate}
\end{definition}

Note that compared to Definition~\ref{Def:MaxPreservingReduction}, we only had
to change Property~\ref{item:maxPreserving2} to assert that the maximal patterns
are feasible.
Further observe that in general the function $\phi_{\Q} = \phi_{\Q}(\phi_{\P},f,f^{-1})$ constructed in the reduction
will depend on $\phi_{\P}$, $f$ and $f^{-1}$.

\paragraph*{Properties}
The rest of this subsection is devoted to proving properties of
maximality-preserving reductions for \ffpp s. First, we show that
maximality-preserving reductions are transitive, which is the crucial property
to argue that one can use multiple reductions in a row.
Second, we show that maximality-preserving reductions for frequency-based
problems imply maximality-preserving reductions for \ffpp s.

The following lemma shows that maximality-preserving reductions for \ffpp s are transitive.
The main challenge will be the construction of the feasibility function.
\begin{lemma}
  \label{lemma:ffpp-reductions-transitive}
  Let $\P, \Q, \R$ be \ffpp s.
  Assume there exist maximality-preserving reductions
  from $\P$ to $\Q$ via a function $g$ and $\phi_{\Q}$, and
  from $\Q$ to $\R$ via a function $h$ and $\phi_{\R}$.
  Then there exists a maximality-preserving reduction from $\P$ to $\R$.
\end{lemma}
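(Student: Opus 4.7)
The natural candidate for the reduction from $\P$ to $\R$ is the composition $f := h \circ g$, with the new feasibility function $\phi_\R'$ taken to be the feasibility function $\phi_\R$ produced by the second reduction when applied with input feasibility function $\phi_\Q$ (which in turn was produced by the first reduction from $\phi_\P$). So the plan is to define $(f, \phi_\R')$ explicitly and then verify the four properties of Definition~\ref{Def:FFPPReduction} by chaining together the corresponding properties of $g$ and $h$.

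\textbf{Properties 1--3 (structural).} Polynomial-time computability and injectivity of $f$ are immediate from those of $g$ and $h$. Property~\ref{item:patternSubsets2} follows from
\[ f(\patterns{\P}) = h(g(\patterns{\P})) \subseteq h(\patterns{\Q}) \subseteq \patterns{\R}, \]
using Property~\ref{item:patternSubsets2} for each of the two given reductions. Property~\ref{item:subsetPreserving2} is an immediate chaining: $p\sqsubseteq_\P p'$ iff $g(p)\sqsubseteq_\Q g(p')$ iff $h(g(p))\sqsubseteq_\R h(g(p'))$. For Property~\ref{item:efficientInverse2}, I would set $f^{-1} := g^{-1}\circ h^{-1}$; given any $r\in\transactions{\R}$, compute $h^{-1}(r)$ and then $g^{-1}(h^{-1}(r))$, each in polynomial time.

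\textbf{Property 4 (maximality preservation).} Let $D_\R := f(D_\P) = h(g(D_\P))$ and recall $\phi_\R' = \phi_\R$ was obtained by applying the second reduction to the instance $(g(D_\P),\tau,\phi_\Q)$. Then I chain the two given equivalences:
\[ p \in \Max{D_\P,\tau,\phi_\P} \;\Longleftrightarrow\; g(p)\in\Max{g(D_\P),\tau,\phi_\Q} \;\Longleftrightarrow\; h(g(p))\in\Max{D_\R,\tau,\phi_\R'}, \]
where the first equivalence is Property~\ref{item:maxPreserving2} of the $\P$-to-$\Q$ reduction and the second is Property~\ref{item:maxPreserving2} of the $\Q$-to-$\R$ reduction (applied to the database $g(D_\P)$ with feasibility function $\phi_\Q$). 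For the preimage clause, given any $r \in \Max{D_\R,\tau,\phi_\R'}$, the second reduction guarantees that $q := h^{-1}(r)$ exists and lies in $\Max{g(D_\P),\tau,\phi_\Q}$, and then the first reduction guarantees that $g^{-1}(q) = f^{-1}(r)$ exists, as required.

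\textbf{Main obstacle.} The proof is essentially bookkeeping; the one subtle point is verifying that $\phi_\R'$ still meets the definitional requirements of a feasibility function, in particular that it remains polynomial-time computable. This is fine because $\phi_\R' = \phi_\R(\phi_\Q(\phi_\P,g,g^{-1}),h,h^{-1})$ is a fixed composition of polynomial-time computable functions, each depending in a black-box way on the previous feasibility function and the polynomial-time maps $g,g^{-1},h,h^{-1}$. A second small point is ensuring $D_\R$ really equals $f(D_\P)$, which is a direct calculation since the reductions map databases transaction-by-transaction. Once these are dispatched, the four properties above establish that $(f,\phi_\R')$ is a maximality-preserving reduction from $\P$ to $\R$.
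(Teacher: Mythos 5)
Your proof is correct, and it follows the paper's overall skeleton (compose $f = h\circ g$, verify Properties~1--3 by direct chaining), but it handles the crux --- the feasibility function and Property~4 --- by a genuinely different and more modular route. The paper does not simply reuse the $\phi_{\R}$ output by the second reduction; it constructs a fresh feasibility function $\phi_*$ defined as an explicit four-fold conjunction (existence of $h^{-1}(r)$ and $f^{-1}(r)$, plus feasibility of $r$, $h^{-1}(r)$, and $f^{-1}(r)$ under $\phi_{\R}$, $\phi_{\Q}$, and $\phi_{\P}$ respectively), and then re-proves both directions of Property~4 from scratch via contradiction arguments about frequency and maximality in $D^*$. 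You instead treat the two given reductions as black boxes: you feed the $\phi_{\Q}$ produced by the first reduction into the second, take its output $\phi_{\R}$ as the composed feasibility function, and obtain Property~4 by chaining the two biconditionals, with the preimage clause following from the observation that any $r\in\Max{D_{\R},\tau,\phi_{\R}}$ has $h^{-1}(r)\in\Max{D_{\Q},\tau,\phi_{\Q}}$ (by the iff applied to $q=h^{-1}(r)$) and hence a $g$-preimage. This is shorter and cleaner; what the paper's explicit $\phi_*$ buys is a concrete, self-contained description of the composed feasibility function in which \emph{every} feasible pattern (not just every maximal one) is guaranteed a preimage, which is what licenses steps like ``since $r'$ is feasible, let $p'=f^{-1}(r')$'' inside its contradiction arguments. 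Your remark on the constant-size description and polynomial-time computability of the composed feasibility function addresses the one definitional point that both proofs need; no gap remains.
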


\begin{proof}
	Let $D_{\P}$ and $\phi_{\P}$ be an instance for $\P$.
	We construct an instance $(D^*, \phi_*)$ for $\R$:
	We set $f\colon \transactions{\P} \to \transactions{\R}$ to $f(p) = h(g(p))$
	for $p \in \transactions{\P}$.
	For a pattern $r \in \patterns{\R}$, we set
	$\phi_*(r) = 1$ if and only if
	the following four conditions are satisfied:
	(1) $h^{-1}(r)$ and $f^{-1}(r)$ exist;
	(2) $\phi_{\R}(r) = 1$;
	(3) $\phi_{\Q}(h^{-1}(r)) = 1$; and
	(4) $\phi_{\P}(f^{-1}(r)) = 1$.

	We check the properties from Definition~\ref{Def:FFPPReduction}.
	Property~\ref{item:patternSubsets2} and Property~\ref{item:subsetPreserving2}
	are satisfied since $f$ is the composition $g$ and $h$.
	Property~\ref{item:efficientInverse2} holds since $f^{-1} = g^{-1} \circ h^{-1}$
	and both $g^{-1}$ and $h^{-1}$ can be computed in polynomial time.

	The rest of the proof is devoted to proving Property~\ref{item:maxPreserving2}.

	Let $p \in \Max{D_{\P},\tau,\phi_{\P}}$.
	Then $p$ is feasible w.r.t.\ $\phi_{\P}$.
	By the reduction from $\P$ to $\Q$,
	$g(p) \in \Max{D_{\Q},\tau,\phi_{\Q}}$, where
	$D_{\Q} = g( D_{\P} )$. Note that $g(p)$ is feasible w.r.t.\ $\phi_{\Q}$.
	Using the reduction from $\Q$ to $\R$, we obtain
	$r := h(g(p)) \in \Max{ D_{\R}, \tau, \phi_{\R} }$, where
	$D_{\R} = h( D_{\Q} )$; additionally, $r$ is feasible w.r.t.\ $\phi_{\R}$.
	Now observe that $r = f(p)$ and that $r$ is feasible w.r.t.\ the operation
	$\phi_*$ defined above.
	Note that $r$ is frequent in $D^*$ since for each transaction $t \in D_{\P}$
	with $p \sqsubseteq_{\P} t$, $r = f(p) \sqsubseteq_{\R} f(t)$
	by Property~\ref{item:subsetPreserving2} of $f$.
	To prove that $r \in \Max{D^*, \tau, \phi_*}$, it remains to show that
	$r$ is maximal.
	Suppose not. Then there exists a pattern $r' \in \Max{D^*, \tau, \phi_*}$ such that
	$r \sqsubset_{\R} r'$. Since $r'$ is feasible, let $p' = f^{-1}(r')$.
	By Property~\ref{item:subsetPreserving2} of $f$, we have that $p \sqsubset_{\P} p'$
	and that $p'$ is frequent
	since $p' \sqsubset_{\P} t$ for $t \in D_{\P}$ if and only if $f(p') = r' \sqsubset_{\R} f(t)$.
	This contradicts the maximality of $p$.
	Hence, we proved that $r \in \Max{D^*, \tau, \phi_*}$.

	Let $r \in \Max{D^*, \tau, \phi_*}$.
	Since $r$ is feasible w.r.t.\ $\phi_*$, there exists
	$p = f^{-1}(r) \in \patterns{\P}$ that is feasible w.r.t.\ $\phi_{\P}$.
	By Property~\ref{item:subsetPreserving2}, $p$ is frequent in $D_{\P}$.
	It remains to show that $p$ is maximal.
	We argue by contradiction.
	Suppose there exists a frequent pattern $p'$ with $p \sqsubset p'$.
	Then $f(p') \in \Max{D_*,\tau,\phi_*}$ by the previous paragraph,
	and $r \sqsubset f(p')$ by Property~\ref{item:subsetPreserving2} of $f$.
	This contradicts the maximality of $r$.
	Hence, $p \in \Max{ D_{\P}, \tau, \phi_{\P}}$.
\end{proof}

The next lemma shows that if for two frequency-based problems $\P$ and $\Q$
there exists a maximality-preserving reduction from $\P$ to $\Q$,
then there also exists a reduction between the \ffpp-version of these problems.
\begin{lemma}
Let $\P$ and $\Q$ be two frequency-based problems, and let $\P'$ and $\Q'$
be the \ffpp-versions of those problems.
Suppose there exists a maximality-preserving reduction from $\P$ to $\Q$ via
a mapping $g$.

Then there exists a maximality-preserving reduction from $\P'$ to $\Q'$.
\end{lemma}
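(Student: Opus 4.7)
The plan is to take $f = g$ as the injective map in Definition~\ref{Def:FFPPReduction} and obtain the new feasibility function $\phi_{\Q'}$ by pulling $\phi_{\P'}$ back along $g^{-1}$. Concretely, given an instance $(D_{\P'}, \tau, \phi_{\P'})$ of $\P'$, I define $D_{\Q'} = g(D_{\P'})$ and, for every $q \in \patterns{\Q}$, set $\phi_{\Q'}(q) = 1$ if and only if (i) the preimage $g^{-1}(q)$ exists in $\patterns{\P}$ and (ii) $\phi_{\P'}(g^{-1}(q)) = 1$. This function is polynomial-time computable because $g^{-1}$ is and $\phi_{\P'}$ is; it also has a constant-size description (a fixed program that invokes $\phi_{\P'}$ as a subroutine on $g^{-1}(\cdot)$), so it satisfies the requirement for a feasibility function.

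Next I would observe that Properties~\ref{item:patternSubsets2}--\ref{item:efficientInverse2} of Definition~\ref{Def:FFPPReduction} come for free: the assumed reduction from $\P$ to $\Q$ already supplies an injective polynomial-time $g\colon \transactions{\P} \to \transactions{\Q}$ satisfying $g(\patterns{\P}) \subseteq \patterns{\Q}$, subset-equivalence under $\sqsubseteq$, and an efficiently computable inverse. So the only substantive task is Property~\ref{item:maxPreserving2}, the maximality correspondence with feasibility.

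For the forward direction, let $p \in \Max{D_{\P'}, \tau, \phi_{\P'}}$. By Property~\ref{item:subsetPreserving1} of $g$, $\supp(g(p), D_{\Q'}) \geq \supp(p, D_{\P'}) \geq \tau$, so $g(p)$ is frequent; by construction of $\phi_{\Q'}$ it is also feasible because $g^{-1}(g(p)) = p$ is feasible. To prove maximality, suppose some $q \in \patterns{\Q}$ with $g(p) \sqsubset_{\Q} q$ is feasible and frequent in $D_{\Q'}$. Feasibility of $q$ forces $p' := g^{-1}(q)$ to exist and to satisfy $\phi_{\P'}(p') = 1$; Property~\ref{item:subsetPreserving2} gives $p \sqsubset_{\P} p'$; and since every $t \in D_{\P'}$ with $p' \sqsubseteq_{\P} t$ yields $q = g(p') \sqsubseteq_{\Q} g(t) \in D_{\Q'}$, we get $\supp(p', D_{\P'}) \geq \supp(q, D_{\Q'}) \geq \tau$, contradicting maximality of $p$. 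The backward direction is symmetric: if $q \in \Max{D_{\Q'}, \tau, \phi_{\Q'}}$ then its preimage $p = g^{-1}(q)$ exists by feasibility, is feasible for $\phi_{\P'}$ by construction, is frequent via Property~\ref{item:subsetPreserving2}, and any strictly larger feasible frequent $p'$ would push $g(p') \sqsupset q$ and remain feasible and frequent, contradicting maximality of $q$. The ``additionally'' clause of Property~\ref{item:maxPreserving2} is immediate since feasibility of $q$ already guarantees that $g^{-1}(q)$ exists.

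The main subtlety, and the only place where something could go wrong, is the bookkeeping around frequency under the new feasibility constraint: one must be careful that the supports transfer correctly in both directions. In the original (unconstrained) reduction, the argument only needs $\supp(p, D_{\P}) = \supp(g(p), D_{\Q})$ to hold for \emph{maximal} frequent patterns, whereas here the maximality competition is restricted to feasible patterns on both sides. The resolution is precisely that $\phi_{\Q'}$ was defined as the pullback of $\phi_{\P'}$, so feasibility is preserved bijectively through $g$ and $g^{-1}$ whenever a preimage exists; hence the maximal feasible frequent patterns in $D_{\Q'}$ are exactly the images under $g$ of the maximal feasible frequent patterns in $D_{\P'}$, and no ``spurious'' feasible witness in $\Q'$ can appear without a corresponding witness in $\P'$.
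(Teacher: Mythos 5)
Your proof is correct and follows essentially the same route as the paper's: set $f = g$, define $\phi_{\Q'}$ as the pullback of $\phi_{\P'}$ along $g^{-1}$ (feasible iff the preimage exists and is feasible), inherit Properties~\ref{item:patternSubsets2}--\ref{item:efficientInverse2} from $g$, and establish Property~\ref{item:maxPreserving2} by contradiction in both directions. The only nit is in the forward direction: the justification you give (``every $t \in D_{\P'}$ with $p' \sqsubseteq_{\P} t$ yields $q \sqsubseteq_{\Q} g(t)$'') actually supports $\supp(q, D_{\Q'}) \geq \supp(p', D_{\P'})$ rather than the inequality you invoke; the direction you need follows from the converse half of the biconditional in Property~\ref{item:subsetPreserving2}, i.e., every transaction of $D_{\Q'}$ supporting $q$ is of the form $g(t)$ with $p' \sqsubseteq_{\P} t$, which is how the paper phrases it.
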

\begin{proof}
\emph{Construction of $f$.}
We set $f \equiv g$.
Given a pattern $q \in \patterns{\Q}$,
we set $\phi_{\Q'}(q) = 1$ iff
$f^{-1}(q)$ exists and $\phi_{\P'}(f^{-1}(q)) = 1$.

\emph{Maximality-preserving.}
Note that Properties~\ref{item:patternSubsets2}--\ref{item:efficientInverse2}
of maximality-preserving reductions for $f$ are
satisfied since they are satisfied for $g$.
We prove Property~\ref{item:maxPreserving2} of $f$.

Let $p \in \Max{ D_{\P}, \tau, \phi_{\P} }$.
We show that $f(p) \in \Max{ D_{\Q}, \tau, \phi_{\Q}}$.
Observe that $f(p)$ is feasible w.r.t.\ $\phi_{\Q}$ since
$f^{-1}(f(p)) = p$ is feasible w.r.t.\ $\phi_{\P}$.
Note that $f(p)$ is frequent in $D_{\Q}$ by
Property~\ref{item:subsetPreserving2} of $f$.
We need to argue that $f(p)$ is also maximal.
Suppose this is not the case. Then there exists a pattern
$q \in \Max{ D_{\Q}, \tau, \phi_{\Q}}$ such that $f(p) \sqsubset q$.
Since $q$ is feasible, there exists a feasible pattern
$p' = f^{-1}(q) \in \patterns{\P}$. By Property~\ref{item:subsetPreserving2},
we have $p \sqsubset p'$.
Additionally, the pattern $p'$ is frequent in $D_{\P}$:
for each transaction $t \in D_{\Q}$ with $q \sqsubset_{\Q} t$, $p' \sqsubset_{\P} f^{-1}(t)$
(by Property~\ref{item:subsetPreserving2} of $f$ and definition of
 $D_{\Q}$).
This contradicts the maximality of $p$.

Let $q \in \Max{ D_{\Q}, \tau, \phi_{\Q} }$.
Since $q$ is feasible, $p = f^{-1}(q)$ exists
and is feasible w.r.t.\ $\phi_{\P}$.
We show that $p \in \Max{ D_{\P}, \tau, \phi_{\P}}$.
Note that $p$ is frequent in $D_{\P}$ by Property~\ref{item:subsetPreserving2}
of $f$. We prove the maximality of $p$ by contradiction. Suppose there exists a
pattern $p' \in \Max{ D_{\P}, \tau, \phi_{\P}}$ with $p \sqsubset p'$.
Then by the previous paragraph the pattern $f(p')$ is a feasible frequent
pattern in $D_{\Q}$ with $q = f(p) \sqsubset f(p')$. This contradicts the maximality
of $q$.
\end{proof}

\subsection{Reductions}

\paragraph*{From graphs to feasible frequent itemsets}
We show that any algorithm solving the $\MaxFFIS$-problem can be used
to mine maximal frequent subgraphs in general graphs.

\begin{lemma}
\label{Thm:MaxFSGraphsFFIS}
There exists a maximality-preserving reduction from $\MaxFFS{\Graphs}$
to $\MaxFFIS$.
\end{lemma}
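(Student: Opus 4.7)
\emph{Construction of $f$.} The plan is to encode each labelled graph as an itemset whose items represent vertices and edges. For graphs whose vertex labels come from $\L = \{1, \dots, n\}$, I would work with the extended itemset alphabet $\L' = \L \cup \{ \{i,j\} : i, j \in \L,\, i \neq j \}$, where the first $n$ items mark the presence of a vertex with a given label and the remaining $\binom{n}{2}$ items mark the presence of an edge between vertices with given labels. Identifying each vertex with its (unique) label, I set
\[
f(G) = \{\labelop{v} : v \in V\} \cup \bigl\{ \{\labelop{u}, \labelop{v}\} : (u,v) \in E \bigr\}
\]
for $G = (V,E) \in \transactions{\MaxFFS{\Graphs}}$. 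The inverse $f^{-1}$ splits a given itemset into its vertex- and edge-items and checks that every edge has both endpoints present and that the resulting graph is connected; if so, it returns that graph, and otherwise it is undefined on this input.

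With this construction, Properties~\ref{item:patternSubsets2}--\ref{item:efficientInverse2} of Definition~\ref{Def:FFPPReduction} are straightforward: $\patterns{\MaxFFIS} = 2^{\L'}$ trivially contains $f(\patterns{\MaxFFS{\Graphs}})$; the inverse is polynomial-time as described; and Property~\ref{item:subsetPreserving2} holds because, with one-to-one labels, $G \sqsubseteq G'$ is equivalent to $V(G) \subseteq V(G')$ and $E(G) \subseteq E(G')$, which is literally $f(G) \subseteq f(G')$.

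The real work is Property~\ref{item:maxPreserving2}, and its success hinges on choosing the target feasibility function correctly. I would set $\phi_{\MaxFFIS}(I) = 1$ iff $f^{-1}(I)$ exists and $\phi_{\MaxFFS{\Graphs}}(f^{-1}(I)) = 1$; this description is constant-size as it only invokes the given $\phi_{\MaxFFS{\Graphs}}$ and the fixed routine for $f^{-1}$. Setting $D_{\MaxFFIS} := f(D_{\MaxFFS{\Graphs}})$, Property~\ref{item:subsetPreserving2} applied across transactions gives $\supp(f(p), D_{\MaxFFIS}) = \supp(p, D_{\MaxFFS{\Graphs}})$, and by construction $p$ is feasible for $\MaxFFS{\Graphs}$ iff $f(p)$ is feasible for $\MaxFFIS$. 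For the forward direction, suppose $p \in \Max{D_{\MaxFFS{\Graphs}}, \tau, \phi_{\MaxFFS{\Graphs}}}$ but $f(p)$ is not maximal: any feasible frequent $q \sqsupsetneq f(p)$ would yield $f^{-1}(q)$, a feasible graph strictly containing $p$ and frequent by the same support identity, contradicting the maximality of $p$. The converse direction is symmetric.

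The main obstacle, and the reason the reduction must land in $\MaxFFIS$ rather than in $\MaxFIS$, is precisely the feasibility function: an arbitrary itemset over $\L'$ may contain ``dangling'' edge-items without their endpoints or encode a disconnected graph, and without $\phi_{\MaxFFIS}$ pruning these, spurious maximal itemsets with no graph counterpart would break Property~\ref{item:maxPreserving2}.
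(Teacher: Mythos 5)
Your proof is correct and follows essentially the same route as the paper: both encode a labelled graph as an itemset over edge-label pairs and use the target feasibility function to reject itemsets that do not correspond to connected, well-formed graphs, then derive maximality from the subset-preservation and support identities. The only differences are cosmetic --- you add redundant vertex-items (the paper uses only items $(\labelop{u},\labelop{v})$ for edges) and you fold the connectivity test into the domain of $f^{-1}$, whereas the paper states it as an explicit second condition of $\phi_{\MaxFFIS}$.
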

\begin{proof}
Let $D_{\MaxFFS{\Graphs}}$ be a database consisting of
labelled graphs from $\Graphs$ with labels from $\{1, \dots, n\}$,
let $\tau$ be a support threshold,
let $\phi_{\MaxFFS{\Graphs}}$ be a feasibility function.

\emph{Construction of $f$.}
For $\MaxFFIS$ we use the labels $\L = \{ 1, \dots, n\}^2$.
Let $G = (V,E)$ be a graph from $D_{\MaxFFS{\Graphs}}$.
We construct an itemset $I(G) := f(G)$ by mapping the graph
onto the labels of its edges, i.e., we construct an
itemset $I(G) = \{ (\labelop{u}, \labelop{v}) : (u,v) \in E \}$.

Given an itemset $I \in \patterns{\MaxFFIS}$, we set
$\phi_{\MaxFFIS}(I) = 1$ iff
(1) $f^{-1}(I)$ exists and $\phi_{\MaxFFS{\Graphs}}(f^{-1}(I)) = 1$, and
(2) for each pair of tuples $(a,b), (c,d) \in I$
	there exists a sequence $(a,b) = (e_1,e_1'), \dots, (e_k,e_k') = (c,d)$
	of tuples $(e_i, e_i') \in I$ with the following property:
	For each pair of consecutive tuples $(e_i, e_i')$ and $(e_{i+1},e_{i+1}')$,
	there exists some $\ell \in \{1,\dots,n\}$ with $\ell \in \{e_i, e_i'\}$ and $\ell \in
	\{e_{i+1},e_{i+1}'\}$.
Intuitively, condition (2) of $\phi_{\MaxFFIS}$ asserts that the graphs corresponding
to the itemset $I$ must be connected.

\emph{Maximality-preserving.}
Note that any feasible frequent itemset in $D_{\MaxFFIS}$ corresponds to a
frequent \emph{connected} graph in $D_{\MaxFFS{\Graphs}}$ due to the choice of
$\phi_{\MaxFFIS}$.
Observe that there exists a bijection between connected subgraphs $G$ and feasible itemsets $I(G) \subseteq \L'$.
Further observe that for two frequent subgraphs $G$ and $H$,
$G \subseteq H$ if and only if $f(G) \subseteq f(H)$.
It follows that a graph $G$ and an itemset $I$ must have the same supports in
$D_{\MaxFFS{\Graphs}}$ and $D_{\MaxFIS}$, respectively. The maximality then follows from the
subset-property we observed.
\end{proof}

Note that the reduction simplifies when $\phi_{\MaxFFS{\Graphs}} \equiv 1$,
i.e., when we consider the reduction from frequency-based problem $\MaxFS{\Graphs}$
to the \ffpp $\MaxFFIS$.
Then the mapping $f$ stays the same and $\phi_{\MaxFFIS}$ only needs to check
condition~(2). We believe that many algorithms for mining itemsets can be
augmented with this choice of $\phi_{\MaxFFIS}$ function to mine graph
patterns as we will discuss further in Section~\ref{sec:algor-exper}.

Observe that while condition~(2) looks rather technical, it can be easily
implemented using a graph traversal. Additionally, when computing the union of
two feasible patterns, an algorithm only needs to check if both patterns share
any label.

Note also that the reduction above works as well for directed graphs (we just
need to distinguish between edge labels $(\labelop{u}, \labelop{v})$
and $(\labelop{v}, \labelop{u})$). This immediately gives us the
following lemma.
\begin{lemma}
  \label{thm:MaxFFSDirG_MaxFFIS}
	There exists a maximality-preserving reduction from $\MaxFFS{\DirG}$
	to $\MaxFFIS$.
\end{lemma}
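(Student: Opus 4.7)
The plan is to reuse essentially the same reduction as in Lemma~\ref{Thm:MaxFSGraphsFFIS}, exploiting the fact that the mapping already uses \emph{ordered} pairs $(\labelop{u},\labelop{v})$ as itemset labels. The only adjustment needed is to interpret these tuples as directed edges rather than identifying $(\labelop{u},\labelop{v})$ with $(\labelop{v},\labelop{u})$, which happens automatically if we simply leave them distinct. Concretely, given a directed graph $G=(V,E)\in\transactions{\MaxFFS{\DirG}}$ over labels $\{1,\dots,n\}$, I would define $f(G) = \{(\labelop{u},\labelop{v}) : (u,v)\in E\}$ as an itemset over the label set $\L' = \{1,\dots,n\}^2$, where now $(a,b)\neq(b,a)$ for $a\neq b$. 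Since we assume each vertex label occurs at most once in any transaction, $f$ is injective, polynomial-time computable, and its inverse $f^{-1}$ is obtained by reading every tuple $(a,b)$ in an itemset as a directed edge from the vertex labelled $a$ to the one labelled $b$.

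The feasibility function $\phi_{\MaxFFIS}$ would be defined exactly as in the proof of Lemma~\ref{Thm:MaxFSGraphsFFIS}: an itemset $I$ is feasible iff (1)~$f^{-1}(I)$ exists and $\phi_{\MaxFFS{\DirG}}(f^{-1}(I))=1$, and (2)~the tuples in $I$ are pairwise connected via a chain of tuples that share a coordinate. Note that condition (2) enforces \emph{weak} connectivity of the corresponding directed graph, which matches the connectedness convention used for the graph classes defined in Section~\ref{sec:problems-in-this-paper}. Condition (1) on its existence amounts to checking that the multiset of tuples in $I$ forms a valid directed graph in which each vertex label is used at most once; since we use ordered pairs, no tuple in a feasible $I$ can correspond to two opposing edges being collapsed, so the preimage is well defined and unique whenever it exists.

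The verification of the four reduction properties then proceeds exactly as in the proof of Lemma~\ref{Thm:MaxFSGraphsFFIS}. Property~\ref{item:patternSubsets2} holds because $f$ sends valid directed graphs to itemsets that satisfy~(1) and~(2). Property~\ref{item:subsetPreserving2} holds because $G\sqsubseteq_{\DirG} H$ is equivalent to every directed edge of $G$ appearing in $H$, which, under the injective mapping of directed edges to ordered tuples, is the same as $f(G)\subseteq f(H)$. Property~\ref{item:efficientInverse2} is immediate from the definition of $f^{-1}$. For Property~\ref{item:maxPreserving2}, one copies the earlier argument almost verbatim: supports of $G$ and $f(G)$ coincide by Property~\ref{item:subsetPreserving2} and feasibility on both sides aligns by construction of $\phi_{\MaxFFIS}$, so maximal feasible frequent itemsets in $D_{\MaxFFIS}$ correspond bijectively to maximal feasible frequent directed subgraphs in $D_{\MaxFFS{\DirG}}$.

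The only place where the directed case differs from the undirected one in a non-cosmetic way is in ensuring that condition~(2) of $\phi_{\MaxFFIS}$ is the correct notion of connectivity for directed graphs; once we agree that \emph{weak} connectivity is the intended notion (consistent with how $\DirG$ is treated elsewhere in the paper), there is no real obstacle. No new structural argument is required beyond the one already carried out for $\MaxFFS{\Graphs}$.
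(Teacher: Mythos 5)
Your proposal is correct and follows exactly the paper's approach: the paper dispatches this lemma in one sentence by observing that the reduction of Lemma~\ref{Thm:MaxFSGraphsFFIS} carries over verbatim once the ordered labels $(\labelop{u},\labelop{v})$ and $(\labelop{v},\labelop{u})$ are kept distinct, which is precisely what you do. Your additional remarks on weak connectivity and the well-definedness of the preimage are sensible elaborations of details the paper leaves implicit.
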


\paragraph*{From sequences to feasible DAGs}
To finish the hierarchy of Figure~\ref{fig:hierarchy}, we need one more
reduction, from $\MaxFSQSn$ to $\MaxFFS{\DAG}$.

\begin{lemma}
  \label{thm:MaxFSQS_MaxFFSDAG}
  There exists a maximality-preserving reduction from $\MaxFSQSn$ to $\MaxFFS{\DAG}$.
\end{lemma}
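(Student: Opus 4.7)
The plan is to encode each sequence as the unique transitive tournament on its elements, and to use the feasibility function on the DAG side to rule out every subgraph that is not itself a transitive tournament. Formally, for a sequence $S = \langle s_1, \dots, s_m \rangle$ I would set $f(S)$ to be the labelled DAG with vertex set $\{s_1, \dots, s_m\}$ (each vertex carrying the corresponding label, which by assumption is unique in $S$) and directed edge set $\{(s_i, s_j) : 1 \le i < j \le m\}$. The inverse $f^{-1}(H)$ is defined when $H$ is a transitive tournament on labelled vertices and returns the unique topological order of $H$; both $f$ and $f^{-1}$ are polynomial-time computable, $f$ is injective, and $f(S)$ is always connected, so Properties~\ref{item:patternSubsets2} and~\ref{item:efficientInverse2} are immediate. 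On the feasibility side I would set $\phi_{\MaxFFS{\DAG}}(H) = 1$ iff $f^{-1}(H)$ is defined and $\phi_{\MaxFSQSn}(f^{-1}(H)) = 1$.

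The heart of the reduction is Property~\ref{item:subsetPreserving2}: for sequences $S, T$, one has $S \sqsubseteq T$ iff $f(S) \sqsubseteq f(T)$. The forward direction is immediate, since the indices witnessing $S \sqsubseteq T$ map every pair of elements of $S$ to a pair of elements of $T$ in the same relative order, and therefore every edge of $f(S)$ to an edge of $f(T)$. For the reverse direction, if $f(S)$ is a subgraph of $f(T)$ then the labels of $V(f(S))$ form a subset of those of $V(f(T))$, and every edge of $f(S)$ must agree in direction with the corresponding edge of $f(T)$; hence the linear order that $S$ imposes on $V(f(S))$ coincides with the restriction of the order imposed by $T$, so $S$ is a subsequence of $T$.

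What remains is Property~\ref{item:maxPreserving2}. If $S \in \Max{D_{\MaxFSQSn}, \tau, \phi_{\MaxFSQSn}}$, then $f(S)$ is feasible by construction and by Property~\ref{item:subsetPreserving2} satisfies $\supp(f(S), D_{\MaxFFS{\DAG}}) = \supp(S, D_{\MaxFSQSn}) \ge \tau$. If some feasible frequent $H$ with $f(S) \sqsubset H$ existed, feasibility would force $H = f(T)$ for a sequence $T$ with $\phi_{\MaxFSQSn}(T) = 1$, Property~\ref{item:subsetPreserving2} would yield $S \sqsubset T$, and $T$ would be at least as frequent as $H$, contradicting maximality of $S$. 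The converse direction is symmetric: starting from $H \in \Max{D_{\MaxFFS{\DAG}}, \tau, \phi_{\MaxFFS{\DAG}}}$, feasibility gives $H = f(S)$ for some feasible sequence $S$, which must be maximal frequent by the same support identity.

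The main obstacle is that $\MaxFFS{\DAG}$ uses arbitrary (non-induced) subgraphs, so a subgraph of $f(T)$ need not be a transitive tournament and therefore need not correspond to any subsequence. The feasibility function $\phi_{\MaxFFS{\DAG}}$ is precisely designed to suppress these spurious subgraphs: once we insist that the pattern itself be a transitive tournament, its edge set is fully determined by its vertex set together with the linear order inherited from $T$, which yields a clean bijection between feasible subgraphs of $f(T)$ and subsequences of $T$. Checking this correspondence carefully is the only non-routine step; once it is in place, Properties~\ref{item:patternSubsets2}--\ref{item:maxPreserving2} follow essentially by unpacking definitions.
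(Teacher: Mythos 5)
Your proposal is correct and follows essentially the same route as the paper: the same transitive-tournament encoding $f(S)$ with all forward edges, the same feasibility function $\phi_{\MaxFFS{\DAG}}(H)=1$ iff $f^{-1}(H)$ exists and is feasible for $\MaxFSQSn$, and the same contradiction arguments for Property~\ref{item:maxPreserving2} in both directions. The only cosmetic difference is that the paper additionally observes that any proper feasible supergraph of $f(S)$ must contain extra \emph{vertices} (since adding an edge on the same vertex set would create a cycle), whereas you route that step directly through the bijection with sequences; both are fine.
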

\begin{proof}
Let $D_{\MaxFSQSn}$ be a database of sequences over
labels from $\L$, let $\tau$ be a support threshold, and let
$\phi_{\MaxFSQSn}$ be a feasibility function.
Recall that a sequence contains each label at most once.

\emph{Construction of $f$.}
For $\MaxFFS{\DAG}$ we use the same labels $\L$.
Consider a sequence $S \in \L^r$ of length $r$ such that $S_i \neq S_j$ for all $i \neq j$.
This sequence is mapped to the graph $G(S)$ with vertices
$V(S) = \{S_1, \dots, S_r\}$, where each vertex $S_i$ is labelled by
$\labelop{S_i}$.
The graph contains the edges
\begin{align*}
E(S) = \{ (S_i, S_j) : i \in \{1, \dots, k-1 \}, j > i\}.
\end{align*}

Given a DAG $p \in \patterns{\MaxFFS{\DAG}}$, we set
$\phi_{\MaxFFS{\DAG}}(p) = 1$ iff
$f^{-1}(p)$ exists and $\phi_{\MaxFSQSn}(f^{-1}(p)) = 1$.

\emph{Maximality-preserving.}
Note that Properties~\ref{item:patternSubsets2}--\ref{item:efficientInverse2}
of maximality-preserving reductions for $f$ are
trivially satisfied. We prove Property~\ref{item:maxPreserving2}.

Let $S$ be sequence from $\Max{D_{\MaxFSQSn}, \tau, \phi_{\MaxFSQSn}}$
of length~$r$.
We show that $G := f(S) \in \Max{ D_{\MaxFFS{\DAG}}, \tau, \phi_{\MaxFFS{\DAG}}}$.
By construction of $D_{\MaxFFS{\DAG}}$ and due to Property~\ref{item:subsetPreserving2},
$G$ is frequent in $D_{\MaxFFS{\DAG}}$.
We need to argue that $G$ is also maximal; we do this by contradiction.
Suppose there exists a feasible graph $H$ such that $G \subset H$.
Observe that adding any edge to $G$ would introduce a cycle.
Hence, $H$ must contain more vertices than $G$.
Since $H$ is also feasible, it corresponds to a sequence $S' = f^{-1}(H)$ of
length at least $r+1$. By Property~\ref{item:subsetPreserving2}, $S'$ is
frequent and $S \sqsubset S'$. This contradicts the maximality of $S$.

Consider any maximal feasible frequent DAG $G = (V,E)$ in $D_{\MaxFFS{\DAG}}$.
Since $G$ is feasible, let $S = f^{-1}(G)$.
Then the sequence $S = \langle v_1, \dots, v_r \rangle$ must be frequent in $D_{\MaxFSQSn}$
by the choice of $f$ and the construction of $D_{\MaxFFS{\DAG}}$.
Additionally, $S$ must be maximal.
Assume it is not. Then there exists a maximal sequence $T$ with $S \sqsubset T$.
By the argument of the previous paragraph, the graph $H = f(T)$ is maximal and frequent.
But then we also have $G = f(S) \sqsubset f(T) = H$, which contradicts the maximality of $G$.
\end{proof}

\section{Algorithms and Experiments}
\label{sec:algor-exper}

In this section, we discuss the practical consequences of our reductions
and show that the reductions can be used to develop efficient real-world
algorithms.

\subsection{Reductions as Algorithms} \label{sec:reduct-as-algor}

In addition to providing us the theoretical understanding of the relationships
between the problems, the reductions also provide us a direct way to solve a
maximal frequent pattern mining problem in one domain by using a solver from the
other domain. As an example, consider the reduction from the
\emph{frequency-based} problem $\MaxFS{\Graphs}$ to the \ffpp{}
$\MaxFFIS$ (Lemma~\ref{Thm:MaxFSGraphsFFIS}) and let $D_G$ be the graph
database for an instance of $\MaxFS{\Graphs}$ and $D_T$ be the transaction
database built by the reduction.

The mapping of patterns $f$ is straight forward, as we only need to generate a
transaction for each graph, and an item for each unique edge label. The crux of
the reduction lies in the feasibility function $\phi$: it has to ensure that the
returned frequent itemsets correspond to \emph{connected} frequent subgraphs in
the original problem. As the feasible frequent itemsets are a strict subset of
all of the frequent subsets,\!\footnote{Note, however, that the feasible
	\emph{maximal} itemsets are not necessarily a subset of all maximal
		itemsets.} we could simply prune out the results at the very end. A
		na\"ive algoritm for solving $\MaxFS{\Graphs}$ could then work as follows:
		(1) build $D_T$ following Lemma~\ref{Thm:MaxFSGraphsFFIS}; (2) compute
		all frequent itemsets from $D_T$; (3) prune out the non-feasible
		frequent itemsets; (4) prune out the non-maximal feasible frequent
		itemsets. 

More efficient implementations are possible, however. In particular, we can add
the feasibility constraint in the mining process, thus reducing the number of
candidates to consider in each iteration. The connectedness constraint is not
monotone, though: it is possible that two itemsets $A$ and $B$ do not correspond
to connected subgraphs, while their union does (e.g., $A=\{(a, b), (c, d)\}$ and
		$B=\{(b,c), (d, e)\}$). On the other hand, if $C$ is a feasible
(connected) frequent itemset in $D_T$, then it can be split into subsets of any
size that are frequent and feasible. This means that we can prune all
infeasible itemsets at the same time when we prune away all infrequent
itemsets. In other words, we can in fact work with \emph{less} candidates (or at
		least with no more) than if we would be doing standard frequent itemset
mining.

The final question in our example is how to implement the feasibility check
efficiently. Let us abuse the notation slightly and denote by $\labelop{A}$ the
set of unique (vertex) labels in an itemset $A$, that is $\labelop{A} = \{l :
\text{edge }(l, \cdot)\text{ or }(\cdot, l)\text{ is an item in } A\}$. Then
$A\cup B$ is a connected (i.e., feasible) itemset if and only if
$\labelop{A}\cap\labelop{B}\neq\emptyset$ and both $A$ and $B$ are connected
(i.e., feasible).  Hence, if we store the sets
$\labelop{A}$ together with the candidate itemsets, we only need to test the
disjointness of these two sets to test the feasibility of $A\cup B$.

The above example should make clear that the reductions we present in this
paper can yield practical algorithms, and it is not too hard to see that
similarly efficient algorithm can be designed following the reduction of
Lemma~\ref{thm:MaxFSQS_MaxFFSDAG}. However, note that in this reduction it would
not be a good idea to add single edges during the candidate generation; an
efficient implementation would ensure that whole nodes with edges to all over vertices are
added. This ensures that the preimage of the reduction exists at all times and that fewer
infeasible candidates are generated.

To further validate our approach, we present some experimental evaluation of the above
algorithm in the next subsection. Before that, let us however discuss a bit on
the general approaches for using the maximality-preserving reductions.

The first observation is that the type of the feasibility constraint obviously
has a big impact on the efficiency of the final algorithm. The study of
constrained frequent pattern mining is well established (see,
e.g.,~\cite{han07frequent} or Section~\ref{Sec:complexity}), and that
research gives characterizations of constraints that can be implemented
efficiently in standard algorithms. Similarly, the constraint-programming
algorithms for data analysis can often be easily adapted for the feasibility
constraints used in frequency-based reductions.

The second observation concerns the number of (non-maximal) frequent itemsets.
Our reductions are only guaranteed to preserve the maximality, and can, in
principle, yield an exponentially larger number of non-maximal frequent (and
feasible) itemsets. This would, naturally, make it practically
infeasible to use the reductions together with standard frequent pattern mining
algorithms. There are a few possible solutions to this. First, many reductions do
not grow the number of feasible frequent patterns. This is, for example, the
case with the reductions in Lemmas~\ref{Thm:MaxFIS}, \ref{Thm:ReductionMaxSQS},
and~\ref{Thm:MaxFSGraphsFFIS}.
Second, a clever implementation of a reduction would only generate candidates
which may be generated by the mapping from the reduction. This can dramatically
decrease the number of possible candidates. In fact, if the implementation
manages not to generate any candidates which have no preimage under the mapping
from the reduction, then the number of possible candidates will not increase at
all. We believe that this is possible for all reductions we present in this paper.
Third, the maximal frequent patterns can also
be found by first finding all the maximal frequent and minimal infrequent
patterns~\cite{gunopulos97data}. Unfortunately for this approach, we do not
yet know the behaviour of minimal infrequent patterns under our reductions.
We leave further studies in this for future work.

\subsection{Experimental Evaluation} \label{sec:exper-eval}

For the experimental evaluation, we implemented the reduction from
$\MaxFS{\Graphs}$ to $\MaxFIS$ (Lemma~\ref{Thm:MaxFSGraphsFFIS}) in a custom
version of the Apriori algorithm~\cite{agrawal96fast}. The constraint on the
feasible patterns was straight forward to implement, as discussed above.\footnote{The code and sample data are available from \codeURL.} 

We tested our approach on a discussion forum data from the
Stack\-Exchange forums.\footnote{\url{
https://archive.org/details/stackexchange}} The data contains 161 different
question-answering forums (we excluded the meta-forums). We concentrated on the
most recent year's activity, and constructed one graph for each forum where the
users are the vertices and there is an edge between two users if one has
answered or commented to the other's question or answer. The vertices are
labelled uniquely using the global user-id. The data has $1\,627\,946$ different
users, and in total $8\,264\,675$ uniquely-labelled edges. Hence, the dataset
does not pose a significant problem for frequent itemset mining algorithms.

We wanted to study the effects the constraint has for the number of candidates.
Recall that the constraint is used to enforce that we find only connected
subgraphs. In Figure~\ref{fig:candidates_pruned}, we show the number of frequent
itemsets and feasible frequent itemsets of different sizes with minimum
frequency $3$. 

\begin{figure}
  \centering
  \begin{tikzpicture}
    \begin{semilogyaxis}[
      xlabel={Itemset size},
      ylabel={Number of itemsets},
      legend entries={All frequent,Only feasible},
      legend pos=south west]
      \addplot table {
        x y
        1 679
        2 613
        3 1288
        4 3545
        5 8985
        6 18842
        7 31961
        8 43805
        9 48630
        10 43759
        11 31824
        12 18564
        13 8568
        14 3060
        15 816
        16 153
        17 18
        18 1
      };
      \addplot table {
        x y
        1 679
        2 277
        3 298
        4 457
        5 854
        6 1702
        7 3163
        8 4950
        9 6038
        10 5454
        11 3552
        12 1652
        13 540
        14 119
        15 16
        16 1
      };
    \end{semilogyaxis}
  \end{tikzpicture}
  \caption{The number of frequent itemsets and feasible frequent itemsets when
	  solving the $\MaxFS{\Graphs}$ problem using $\MaxFIS$ algorithms.
	  The $y$-axis is in logarithmic scale.}
  \label{fig:candidates_pruned}
\end{figure}
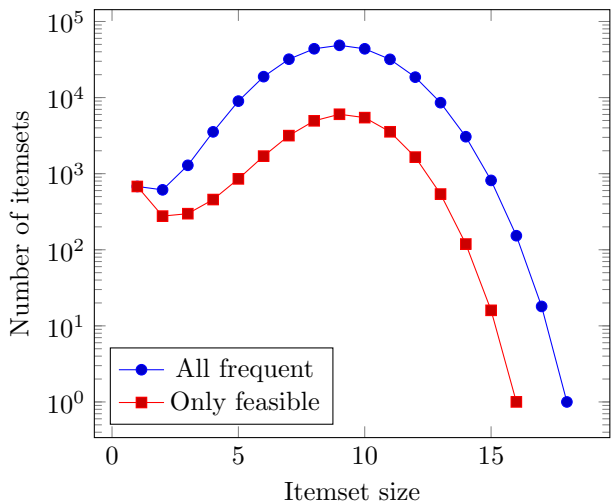

As can be seen from Figure~\ref{fig:candidates_pruned}, the total number of
frequent itemsets is approximately ten times the number of feasible candidates,
		 indicating that the feasibility constraint allows us to prune
		 significant amounts of candidates (there are no feasible candidates of
				 size 17 or 18). In total, the data has $265\,111$ frequent
		 itemsets, of which $29\,752$ were feasible and $549$ were maximal
		 feasible itemsets.

The number of maximal frequent itemsets and maximal feasible frequent itemsets
with respect to different minimum thresholds is presented in
Figure~\ref{fig:maximal_vs_threshold}. We can see that their numbers are mostly
aligned, with the number of maximal itemsets dropping almost exponentially as
the minimum threshold increases. No pattern has support higher than $9$.

\begin{figure}
  \centering
  \begin{tikzpicture}
    \begin{axis}[
      xlabel={Minimum frequency},
      ylabel={Number of maximal itemsets},
      legend entries={All itemsets,Only feasible},
      ]
      \addplot table {
        x y
        3 499
        4 79
        5 12
        6 3
        7 2
        8 2
        9 1
      };
      \addplot table {
        x y
        3 549
        4 79
        5 12
        6 3
        7 2
        8 2
        9 1
      };
    \end{axis}
  \end{tikzpicture}
  \caption{The number of maximal feasible frequent itemsets with different
	  minimum support thresholds.}
  \label{fig:maximal_vs_threshold}
\end{figure}
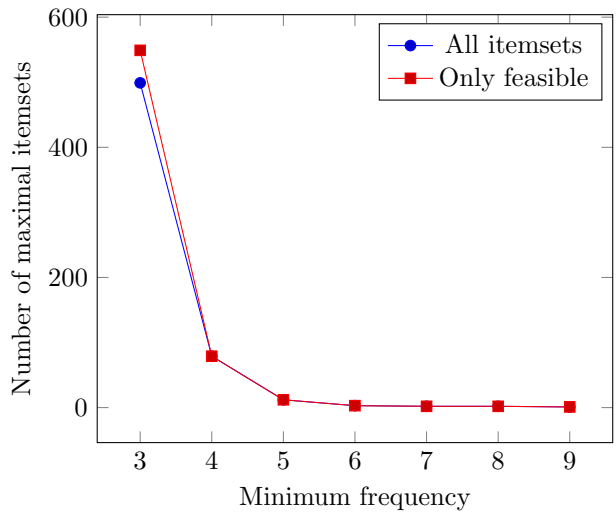



\section{Conclusion}
\label{Sec:conclusion}

We showed that when considering a generalized version of frequency-based
problems, \ffpp, the computational hardness of many frequency-based
problems collapses.
Hence, our reductions provide a unifying framework for the existing
computational hardness results of fundamental data mining problems.
Additionally, our reductions give a formal explanation why algorithms similar
to the Apriori algorithm can be used for such a wide range of
problems by only slightly adjusting the candidate generation.

In the future it will be interesting to study the computational complexity
of frequency-based problems in which labels can appear multiple times.
A daunting question is whether the following two problems exhibit the same hardness:
Mining subsequences without the restriction that each label appears only once,
and mining graphs with possibly multiple vertices of the same label.

The reductions we provide hint that many practical algorithms for frequency-based problems
can be augmented to solve more complicated problems.
We provided such an example in Section~\ref{sec:algor-exper}.
It will be interesting to see if our insights can lead to more efficient algorithms 
for the problems we considered or to algorithms which can solve a wider range of problems.

\bibliography{main}{}

\begin{thebibliography}{10}

\bibitem{aggarwal15data}
Charu~C. Aggarwal.
\newblock {\em Data Mining - The Textbook}.
\newblock Springer, 2015.

\bibitem{agrawal96fast}
Rakesh Agrawal, Heikki Mannila, Srikant Ramakrishnan, Hannu Toivonen, and
  A~Inkeri Verkamo.
\newblock {Fast discovery of association rules}.
\newblock In {\em Advances in Knowledge Discovery and Data Mining}, pages
  307--328. MIT Press, 1996.

\bibitem{agrawal95mining}
Rakesh Agrawal and Ramakrishnan Srikant.
\newblock Mining sequential patterns.
\newblock In {\em ICDE}, pages 3--14, 1995.

\bibitem{bonchi09constraint}
Francesco Bonchi, Fosca Giannotti, Claudio Lucchese, Salvatore Orlando,
  Raffaele Perego, and Roberto Trasarti.
\newblock A constraint-based querying system for exploratory pattern discovery.
\newblock {\em Inf. Syst.}, 34(1):3--27, 2009.

\bibitem{bonchi03exante}
Francesco Bonchi, Fosca Giannotti, Alessio Mazzanti, and Dino Pedreschi.
\newblock Exante: Anticipated data reduction in constrained pattern mining.
\newblock In {\em PKDD}, pages 59--70, 2003.

\bibitem{bonchi04closed}
Francesco Bonchi and Claudio Lucchese.
\newblock On closed constrained frequent pattern mining.
\newblock In {\em ICDM}, pages 35--42, 2004.

\bibitem{boros2003maximal}
Endre Boros, Vladimir Gurvich, Leonid Khachiyan, and Kazuhisa Makino.
\newblock On maximal frequent and minimal infrequent sets in binary matrices.
\newblock {\em Ann. Math. Artif. Intell.}, 39(3):211--221, 2003.

\bibitem{burdick05mafia}
Douglas Burdick, Manuel Calimlim, Jason Flannick, Johannes Gehrke, and Tomi
  Yiu.
\newblock {MAFIA:} {A} maximal frequent itemset algorithm.
\newblock {\em {IEEE} Trans. Knowl. Data Eng.}, 17(11):1490--1504, 2005.

\bibitem{garofalakis99spirit}
Minos~N. Garofalakis, Rajeev Rastogi, and Kyuseok Shim.
\newblock {SPIRIT:} sequential pattern mining with regular expression
  constraints.
\newblock In {\em VLDB}, pages 223--234, 1999.

\bibitem{grahne00efficient}
G{\"{o}}sta Grahne, Laks V.~S. Lakshmanan, and Xiaohong Wang.
\newblock Efficient mining of constrained correlated sets.
\newblock In {\em ICDE}, pages 512--521, 2000.

\bibitem{greco08mining}
Gianluigi Greco, Antonella Guzzo, and Luigi Pontieri.
\newblock Mining taxonomies of process models.
\newblock {\em Data Knowl. Eng.}, 67(1):74--102, 2008.

\bibitem{gunopulos2003discovering}
Dimitrios Gunopulos, Roni Khardon, Heikki Mannila, Sanjeev Saluja, Hannu
  Toivonen, and Ram~Sewak Sharm.
\newblock Discovering all most specific sentences.
\newblock {\em {ACM} Trans. Database Syst.}, 28(2):140--174, 2003.

\bibitem{gunopulos97data}
Dimitrios Gunopulos, Heikki Mannila, Roni Khardon, and Hannu Toivonen.
\newblock {Data mining, hypergraph transversals, and machine learning}.
\newblock In {\em PODS '97}, pages 209--216, 1997.

\bibitem{guns13pattern}
Tias Guns, Siegfried Nijssen, and Luc~De Raedt.
\newblock k-pattern set mining under constraints.
\newblock {\em {IEEE} Trans. Knowl. Data Eng.}, 25(2):402--418, 2013.

\bibitem{han07frequent}
Jiawei Han, Hong Cheng, Dong Xin, and Xifeng Yan.
\newblock Frequent pattern mining: current status and future directions.
\newblock {\em Data Min. Knowl. Discov.}, 15(1):55--86, 2007.

\bibitem{han04mining}
Jiawei Han, Jian Pei, Yiwen Yin, and Runying Mao.
\newblock Mining frequent patterns without candidate generation: {A}
  frequent-pattern tree approach.
\newblock {\em Data Min. Knowl. Discov.}, 8(1):53--87, 2004.

\bibitem{hunt98complexity}
Harry~B. Hunt, III, Madhav~V. Marathe, Venkatesh Radhakrishnan, and
  Richard~Edwin Stearns.
\newblock The complexity of planar counting problems.
\newblock {\em {SIAM} J. Comput.}, 27(4):1142--1167, 1998.

\bibitem{johnson88generating}
David~S Johnson, Mihalis Yannakakis, and Christos~H Papadimitriou.
\newblock {On generating all maximal independent sets}.
\newblock {\em Inf. Proc. Lett.}, 27(3):119--123, 1988.

\bibitem{bayardo98efficiently}
Roberto J.~Bayardo Jr.
\newblock Efficiently mining long patterns from databases.
\newblock In {\em SIGMOD}, pages 85--93, 1998.

\bibitem{kimelfeld13complexity}
Benny Kimelfeld and Phokion~G. Kolaitis.
\newblock The complexity of mining maximal frequent subgraphs.
\newblock In {\em PODS}, pages 13--24, 2013.

\bibitem{kimelfeld2014complexity}
Benny Kimelfeld and Phokion~G. Kolaitis.
\newblock The complexity of mining maximal frequent subgraphs.
\newblock {\em {ACM} Trans. Database Syst.}, 39(4):32:1--32:33, 2014.

\bibitem{kuramochi01frequent}
Michihiro Kuramochi and George Karypis.
\newblock Frequent subgraph discovery.
\newblock In {\em ICDM}, pages 313--320, 2001.

\bibitem{neumann17reductions}
Stefan Neumann and Pauli Miettinen.
\newblock Reductions for frequency-based data mining problems.
\newblock In {\em ICDM '17}, 2017.
\newblock To appear.

\bibitem{ng98exploratory}
Raymond~T. Ng, Laks V.~S. Lakshmanan, Jiawei Han, and Alex Pang.
\newblock Exploratory mining and pruning optimizations of constrained
  association rules.
\newblock In {\em SIGMOD}, pages 13--24, 1998.

\bibitem{pei02mining}
Jian Pei, Jiawei Han, and Wei Wang.
\newblock Mining sequential patterns with constraints in large databases.
\newblock In {\em CIKM}, pages 18--25, 2002.

\bibitem{provan83complexity}
J.~Scott Provan and Michael~O. Ball.
\newblock The complexity of counting cuts and of computing the probability that
  a graph is connected.
\newblock {\em {SIAM} J. Comput.}, 12(4):777--788, 1983.

\bibitem{raedt08constraint}
Luc~De Raedt, Tias Guns, and Siegfried Nijssen.
\newblock Constraint programming for itemset mining.
\newblock In {\em KDD}, pages 204--212, 2008.

\bibitem{toda92counting}
Seinosuke Toda and Mitsunori Ogiwara.
\newblock Counting classes are at least as hard as the polynomial-time
  hierarchy.
\newblock {\em {SIAM} J. Comput.}, 21(2):316--328, 1992.

\bibitem{vadhan09complexity}
Salil~P. Vadhan.
\newblock The complexity of counting in sparse, regular, and planar graphs.
\newblock {\em {SIAM} J. Comput.}, 31(2):398--427, 2001.

\bibitem{valiant79complexity}
Leslie~G. Valiant.
\newblock The complexity of computing the permanent.
\newblock {\em Theor. Comput. Sci.}, 8:189--201, 1979.

\bibitem{xiao03efficient}
Yongqiao Xiao, Jenq{-}Foung Yao, Zhigang Li, and Margaret~H. Dunham.
\newblock Efficient data mining for maximal frequent subtrees.
\newblock In {\em ICDM}, pages 379--386, 2003.

\bibitem{yang2004complexity}
Guizhen Yang.
\newblock The complexity of mining maximal frequent itemsets and maximal
  frequent patterns.
\newblock In {\em {KDD}}, pages 344--353, 2004.

\bibitem{zaki05efficiently}
Mohammed~Javeed Zaki.
\newblock Efficiently mining frequent trees in a forest: Algorithms and
  applications.
\newblock {\em {IEEE} Trans. Knowl. Data Eng.}, 17(8):1021--1035, 2005.

\end{thebibliography}
\bibliographystyle{plain}

\end{document}